\newtheorem{theorem}{Theorem}[section]
\newtheorem{lemma}[theorem]{Lemma}
\newtheorem{corollary}[theorem]{Corollary}
\newtheorem{observation}[theorem]{Observation}
\newtheorem{proposition}[theorem]{Proposition}
\newtheorem{definition}[theorem]{Definition}
\newenvironment{proof}{\paragraph{Proof:}}{\hfill$\square$}
\newcommand{\calS}{\mathcal{S}}
\newcommand{\calA}{\mathcal{A}}
\newcommand{\calI}{\mathcal{I}}
\newcommand{\calJ}{\mathcal{J}}
\newcommand{\calL}{\mathcal{L}}
\newcommand{\calO}{\mathcal{O}}
\newcommand{\reals}{\ensuremath{\mathbb{R}}}
\newcommand{\NP}{NP}
\title{Max Point-Tolerance Graphs\thanks{Part of this work was previously presented at the 4th biennial Canadian Discrete and Algorithmic Mathematics Conference (CanaDAM) in St. John's, NL, Canada June 10-13, 2013. The abstract and slides are available at: \protect\url{http://canadam.math.ca/2013/program/abs/grg2.html\#sc}.}
}
\author[1]{Daniele Catanzaro}
\author[2]{Steven Chaplick}
\author[2]{Stefan Felsner}	
\author[4]{Bjarni V. Halld\'{o}rsson}
\author[5]{Magn\'{u}s M. Halld\'{o}rsson}
\author[2]{Thomas Hixon}
\author[6]{Juraj Stacho}
\affil[1]{Louvain School of Management and Center for Operations Research and Econometrics (CORE), Universit\'e Catholique de Louvain, Mons, Belgium}
\affil[2]{Institut f\"ur Mathematik, Technische Universit\"at Berlin, Berlin, Germany}
\affil[4]{School of Science and Engineering, Reykjavik University, Reykjav\'ik, Iceland}
\affil[5]{ICE-TCS, School of Computer Science, Reykjavik University, Reykjav\'{i}k, Iceland}
\affil[6]{Department of Industrial Engineering and Operations Research,
Columbia University, New York NY, United States}
\begin{document}
\maketitle

\begin{abstract}
A graph $G$ is a \emph{max point-tolerance (MPT)} graph if each vertex $v$ of $G$ can
be mapped to a \emph{pointed-interval} $(I_v, p_v)$ where $I_v$ is an interval
of $\reals$ and $p_v \in I_v$ such that $uv$ is an edge of $G$ iff $I_u \cap I_v
\supseteq \{p_u, p_v\}$.
MPT graphs model relationships among DNA fragments in genome-wide association studies as well as basic transmission problems in telecommunications.  
We formally introduce this graph class, characterize it, study combinatorial optimization problems on it, and relate it to several well known graph classes. We characterize MPT graphs as a special case of several 2D geometric intersection graphs; namely, triangle, rectangle, L-shape, and line segment intersection graphs. We further characterize MPT as having certain linear orders on their vertex set. Our last characterization is that MPT graphs are precisely obtained by intersecting special pairs of interval graphs. 
We also show that, on MPT graphs, the maximum weight independent set problem can be solved in polynomial time, the coloring problem is \NP-complete, and the clique cover problem has a 2-approximation. 
Finally, we demonstrate several connections to known graph classes; e.g., MPT graphs strictly contain interval graphs and outerplanar graphs, but are incomparable to permutation, chordal, and planar graphs.

\end{abstract}



\section{Introduction}

\emph{Interval} graphs (namely, the intersection graphs of intervals on a line) are well-studied in computer science and discrete mathematics (see e.g.,\cite{Fishburn1985-interval,FG1965}). 
Many combinatorial problems which are \NP-hard in general can be solved efficiently when restricted to interval graphs. For example, the maximum clique problem \cite{FG1965}, the maximum weight independent set problem \cite{Frank1975-chordal-wis}, and the coloring problem \cite{G-AlgGraphTheory2004} can all be solved in linear time on interval graphs.
The recognition problem is also solvable in linear time \cite{BL1976-PQ}.

Due to their theoretical and practical significance many generalizations of interval graphs have been studied (see e.g.,\cite{ACGLLS2012-VPG,CK1987-gen_interval,Gavril1978-VPT,GM1982-tolerance}). 
Particularly relevant to this work are \emph{tolerance graphs}, first introduced in \cite{GM1982-tolerance}. 
A graph is a \emph{tolerance} graph (also known as a \emph{min tolerance} graph) when every vertex $v$ of $G$ can be associated with an interval $I_v$ (of the real number line: $\reals$) and a tolerance value $t_v \in \mathbb{R}$ such that $uv$ is an edge of $G$ iff $|I_u \cap I_v| \geq \min\{t_u,t_v\}$. 
Similarly, a graph is a \emph{max tolerance} graph when each vertex $v$ of $G$ can be associated with an interval $I_v$ and tolerance  $t_v$ such that $uv \in E(G)$ iff $|I_u \cap I_v|\geq \max\{t_u,t_v\}$. For a detailed study of tolerance graphs see  \cite{GT-Tolerance2004}.

In this paper we introduce the class of \emph{max point-tolerance (MPT)} graphs\footnote{Using the phrasing of Golumbic and Trenk~\cite{GT-Tolerance2004} this class would be called max point-core bitolerance graphs. However, this particular class of tolerance graphs was not discussed in \cite{GT-Tolerance2004}.}.
A graph $G$ is an MPT graph when each vertex $v$ of $G$ can be represented by an interval $I_v$ of $\reals$ together with a point $p_v \in I_v$ such that two vertices $u,v$ are adjacent iff both $p_u$ and $p_v$ belong to $I_u \cap I_v$; i.e., each pair of intervals can ``tolerate'' a non-empty intersection (without forming an edge) as long as at least one distinguished point is not contained in this intersection. 
We call such a collection $\{(I_v,p_v)\}_{v \in V(G)}$ of pointed intervals an MPT representation of $G$. 
Moreover, we also denote each $(I_v,p_v)$ by a triplet $(s_v,p_v,e_v)$ where $s_v$ and $e_v$ denote the start and end of $I_v$ respectively. 

MPT graphs have a number of practical applications.  
They can be used to detect loss of heterozygosity events in the human genome; see e.g., \cite{HATI2011-point-tol3,Nature2008-point-tol}. 
In such applications an interval $I$ represents the maximal boundary on a chromosome region from an individual that may carry a deletion and the point $p$ represents a site in the considered region that shows evidence for a deletion.
MPT graphs could also be used to model telecommunication networks; e.g., communication where devices receive message on a wide channel (interval) and send messages on a narrow on a sub-band (point) of that channel. Such an asymmetric ``big'' downlink / ``small'' uplink model is quite common in telecommunication networks (see, e.g., \cite{Uplink-Downlink-2012,Uplink-Downlink-2001}). 
In this situation the edges of the MPT graph correspond to devices with direct two-way communication. 

Some classical optimization problems on MPT graphs correspond to practical problems.  
For example, when modeling genome-wide association studies, finding the chromosomal region showing the highest evidence for a massive loss of heterozygosity in a population of individuals involves solving the maximum clique problem and partitioning all evidence-of-deletion sites into the minimal number of deletions involves solving the minimum clique cover problem \cite{CLH2013-point-tol4}. 
In our telecommunications example, a minimum clique cover corresponds to partitioning the devices into a minimum collection of sets of fully-communicable devices.

Interestingly, the maximum weight clique problem on a MPT graph was shown to be polynomially solvable due to the fact
that an MPT graph can have at most $\calO(n^2)$ maximal cliques \cite{CLH2013-point-tol4}. 
Additionally, the minimum weight clique cover problem was shown to be \NP-complete for submodular cost functions \cite{CLH2013-point-tol4,DJQ2007-intervalMCP}. The complexity of the unweighted clique cover problem on MPT graphs remains unresolved. 

Finally, closely related to MPT graphs is the class of \emph{interval catch digraphs}. 
A digraph $D$ is an \emph{interval catch digraph} when each vertex $v$ of $D$ can be mapped to an interval $I_v$ of $\reals$ together with a point $p_v \in I_v$ such that $uv$ is an arc of $D$ iff $p_u \in I_v$. Notice that MPT graphs are precisely the underlying undirected graphs of the symmetric edges of interval catch digraphs. 
Interval catch digraphs have a vertex order characterization \cite{Maehara1984-CatchDigraphs}, an asteroidal-triple characterization \cite{Prisner1989-CatchDigraphs}, and a polynomial time recognition algorithm \cite{Prisner1994-CatchDigraphs}. However, these results do not translate to MPT graphs.

\medskip

\textbf{Our Contributions:}
We provide characterizations of MPT graphs, utilize these characterizations for combinatorial optimization problems, and relate MPT graphs to well-known graph classes.

In section \ref{sec:Ls} we characterize MPT graphs as a special case of \emph{L-graphs} (intersection graphs of L-shapes in the plane). 
This will imply that MPT is also a subclass of rectangle intersection graphs (also known as boxicity-2 graphs \cite{boxicity}) and of triangle intersection graphs. 
We also use this characterization to show that interval graphs and 2D ray graphs are strict subclasses of MPT graphs. 
We further characterize MPT graphs by certain linear vertex orders. 
In particular, we show that a graph $G = (V,E)$ is MPT iff the vertices of $G$ can be linearly ordered by $<$ so that no quadruple $u,v,w,x \in V$ with $u < v < w < x$ has the edges $uw$ and $vx$ without the edge $vw$. 
Related to this ordering condition, we also describe MPT graphs as the intersection of two special interval graphs (see
Theorem \ref{thm:2interval-MPT}). 
Finally, MPT graphs are characterized as intersection graphs of certain line segments from cyclic line arrangements. 

These characterizations are then used to study combinatorial optimization problems on MPT graphs.
Namely, we demonstrate that the \emph{weighted independent set (WIS)} problem can be solved in polynomial time, the clique cover problem can be 2-approximated in polynomial time, and that the \emph{coloring} problem is \NP-complete but can be $\log(n)$-approximated in polynomial time. 
As part of the approximations, we show that the clique cover number $\gamma(G)$ is at most twice the independence number $\alpha(G)$ and that the chromatic number $\chi(G)$ is at most $\calO(\omega \log(\omega))$ where $\omega$ is the clique number\footnote{The bound on $\chi(G)$ follows from \cite{Chalermsook2011} and one of our characterizations.}. 

Finally, we observe some structural results and compare MPT graphs to several well-known graph classes. For example, we observe that outerplanar graphs are a proper subclass of MPT graphs and characterize them by a ``contact'' MPT representation. We additionally observe infinite families of forbidden induced subgraphs for MPT graphs which are constructed from non-interval and non-outerplanar graphs. 

\medskip

\textbf{Related Work:}
While our results have been obtained independently, there are several places which 
overlap with some existing papers \cite{Lubiw1991,c-And2013,HittingRectangles2014}. 
We will identify each of these as they are presented. 
Note that \cite{c-And2013} is technical report, \cite{HittingRectangles2014} is a refereed conference paper, and \cite{Lubiw1991} is a journal publication.
Some of our results also appear in the Masters Thesis of 
our co-author Thomas Hixon \cite{Hixon2013}. 

\medskip

\textbf{Preliminaries:}
All graphs considered in this paper are simple, undirected, and loopless (unless
otherwise stated). For a graph $G$ with vertex set $V$ and edge set $E$, we use
the following notation. The symbols $n$ and $m$ denote $|V|$ and $|E|$
respectively. For a subset $S$ of $V$, $G[S]$ denotes the subgraph of $G$
induced by $S$ and $G \setminus S$ denotes the subgraph of $G$ induced by $V
\setminus S$; i.e., $G \setminus S = G[V \setminus S]$. For a vertex $v \in V$,
$N(v)$ denotes the neighborhood of $v$ (i.e., the vertices in $G$ which are
adjacent to $v$).

\section{Geometric representations of MPT graphs}
\label{sec:Ls}

In this section we relate MPT graphs to geometric intersection graphs.
Specifically, we characterize MPT graphs as intersection graphs of axis-aligned
L-shapes whose corner points form a line with negative slope (namely,
\emph{linear L-graphs} as defined below). Once we formalize this it will be easy
to see that this implies that MPT graphs are a special subclass of boxicity-2 
graphs and triangle intersection graphs. 
The equivalence between linear L-graphs and MPT graphs is also stated in \cite{c-And2013}.
Later in this paper we use these characterizations to study combinatorial optimization problems on MPT graphs and to relate MPT graphs to classical graph classes. 

An \emph{L-shape} consists of a vertical line segment and a horizontal line segment with a {\em corner} that is the lowest point of the vertical segment and the left-most point of the horizontal segment.
We define a \emph{linear L-system} $\mathcal{L}$ to be a collection of L-shapes \{$L_1$, \ldots, $L_n$\} in the plane such that the corner points of $L_1$, \ldots, $L_n$ are distinct and form a line with negative slope.  
We say that a graph $G$ is a \emph{linear L-graph} if $G$ is the intersection graph of a linear L-system $\calL$ and we refer to $\calL$ as a linear L-system \emph{of} $G$. 
We define \emph{linear rectangle graphs} and \emph{linear right-triangle graphs} similarly (i.e., with the lower-left corners of the shapes forming a line with negative slope; note: we always consider the lower-left corner of each triangle to be the right angle). 
In particular, it is easy to see that these three graph classes are the same; e.g., as in Figure \ref{fig:L-rect-tri}.  

Without loss of generality we assume that the corner points in all linear
systems have the form $(c,-c)$ for some positive integer $c$.  This allows us
to specify each L-shape $L$ in a linear L-system by $(t_L,c_L,r_L)$ where:
$-t_L$ is the y-coordinate of the top of $L$, $(c_L,-c_L)$ is the corner
point of $L$, and $r_L$ is the x-coordinate of the right-most point of $L$.
Such an L-shape is given in Figure \ref{fig:L-anatomy}. 

\begin{figure}[ht]
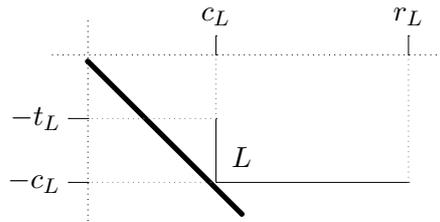
\centering
$\xy/r4pc/:{\ar@{-}@[|2pt] (0,-0.05);(1.2,-1.25)};{\ar@{-} (1,-0.5);(1,-1)};
{\ar@{-} (1,-1);(2.5,-1)};{\ar@{.}@[gray] (1,0);(1,-0.5)};
{\ar@{.}@[gray] (0,-0.5);(1,-0.5)};{\ar@{.}@[gray] (0,-1);(1,-1)};
{\ar@{.}@[gray] (2.5,0);(2.5,-1)};{\ar@{.} (-0.3,0);(2.8,0)};(1.2,-0.8)*{L};
{\ar@{.} (0,0.3);(0,-1.3)};{\ar@{-} (1,0);(1,0.15)};{\ar@{-} (2.5,0);(2.5,0.15)};
{\ar@{-} (0,-0.5);(-0.15,-0.5)};{\ar@{-} (0,-1);(-0.15,-1)};(-0.3,-0.5)*[l]{-t_L};
(-0.3,-1)*[l]{-c_L};(1,0.3)*{c_L};(2.5,0.3)*{r_L};\endxy$
\caption{Anatomy of an L-shape in a linear L-system. Notice that we include a
``platform'' corresponding to the line $x+y=0$ to emphasize the linearity of the
system.}
\label{fig:L-anatomy}
\end{figure}

\begin{figure}[ht]
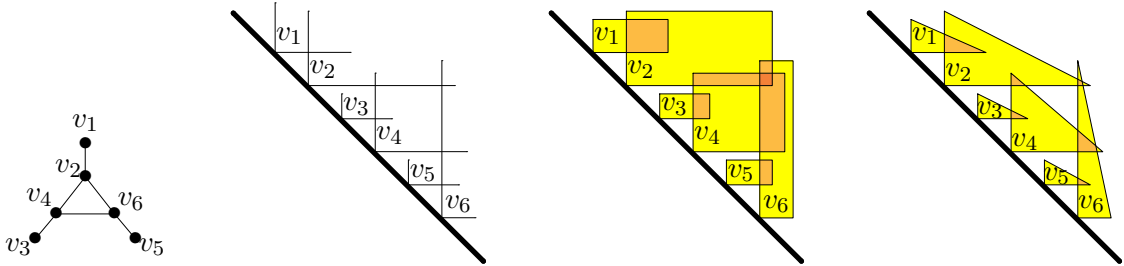
\centering
$\xy/r1.3pc/:(1.7,-3.2)*[o]{}="v1";
(1.7,-4)*[o]{}="v2";(0.5,-5.5)*[o]{}="v3";(1,-4.9)*[o]{}="v4";
(2.9,-5.5)*[o]{}="v5";(2.4,-4.9)*[o]{}="v6";
"v1"*{\bullet};"v2"*{\bullet};
"v3"*{\bullet};"v4"*{\bullet};"v5"*{\bullet};"v6"*{\bullet};{\ar@{-} "v1";"v2"};
{\ar@{-} "v2";"v4"};{\ar@{-} "v2";"v6"};{\ar@{-} "v4";"v6"};{\ar@{-} "v3";"v4"};
{\ar@{-} "v5";"v6"};"v1"+(0,0.5)*{v_1};"v2"+(-0.4,0.2)*{v_2};
"v3"+(-0.4,-0.2)*{v_3};"v4"+(-0.4,0.4)*{v_4};"v5"+(0.4,-0.2)*{v_5};
"v6"+(0.4,0.3)*{v_6};\endxy$
\qquad
$\xy/r1.3pc/:{\ar@{-}@[|2pt] (0,-0.05);(6,-6.05)};{\ar@{-} (1,0.2);(1,-1)};
{\ar@{-} (1,-1);(2.8,-1)};{\ar@{-} (1.8,0);(1.8,-1.8)};
{\ar@{-} (1.8,-1.8);(5.3,-1.8)};{\ar@{-} (2.6,-2);(2.6,-2.6)};
{\ar@{-} (2.6,-2.6);(3.8,-2.6)};{\ar@{-} (3.4,-1.5);(3.4,-3.4)};
{\ar@{-} (3.4,-3.4);(5.6,-3.4)};{\ar@{-} (4.2,-3.6);(4.2,-4.2)};
{\ar@{-} (4.2,-4.2);(5.4,-4.2)};{\ar@{-} (5,-1.2);(5,-5)};
{\ar@{-} (5,-5);(5.8,-5)};(1.35,-0.6)*{v_1};(2.15,-1.4)*{v_2};(2.95,-2.3)*{v_3};
(3.75,-3.0)*{v_4};(4.55,-3.9)*{v_5};(5.4,-4.65)*{v_6};\endxy$
\qquad
$\xy/r1.3pc/:
@i@={(1,-0.2),(1,-1),(2.8,-1),(2.8,-0.2)},0*[yellow]\xypolyline{*};
@i@={(1.8,0),(1.8,-1.8),(5.3,-1.8),(5.3,0)},0*[yellow]\xypolyline{*};
@i@={(2.6,-2),(2.6,-2.6),(3.8,-2.6),(3.8,-2),(2.6,-2)},0*[yellow]\xypolyline{*};
@i@={(3.4,-1.5),(3.4,-3.4),(5.6,-3.4),(5.6,-1.5),(3.4,-1.5)},0*[yellow]\xypolyline{*};
@i@={(4.2,-3.6),(4.2,-4.2),(5.3,-4.2),(5.3,-3.6),(4.2,-3.6)},0*[yellow]\xypolyline{*};
@i@={(5,-1.2),(5,-5),(5.8,-5),(5.8,-1.2),(5,-1.2)},0*[yellow]\xypolyline{*};
@i@={(1.8,-0.2),(1.8,-1),(2.8,-1),(2.8,-0.2)},0*[Dandelion]\xypolyline{*};
@i@={(3.4,-1.5),(3.4,-1.8),(5,-1.8),(5,-1.5)},0*[Dandelion]\xypolyline{*};
@i@={(5,-1.5),(5,-1.8),(5.3,-1.8),(5.3,-1.5)},0*[Orange]\xypolyline{*};
@i@={(5.3,-1.5),(5.3,-1.8),(5,-1.8),(5,-3.4),(5.6,-3.4),(5.6,-1.5),
(5.3,-1.5)},0*[Dandelion]\xypolyline{*};
@i@={(5,-1.2),(5,-1.5),(5.3,-1.5),(5.3,-1.2)},0*[Dandelion]\xypolyline{*};
@i@={(5,-3.6),(5,-4.2),(5.3,-4.2),(5.3,-3.6)},0*[Dandelion]\xypolyline{*};
@i@={(3.4,-2),(3.4,-2.6),(3.8,-2.6),(3.8,-2)},0*[Dandelion]\xypolyline{*};
@i@={(1,-0.2),(1,-1),(2.8,-1),(2.8,-0.2),(1,-0.2)},0*\xypolyline{};
@i@={(1.8,0),(1.8,-1.8),(5.3,-1.8),(5.3,0),(1.8,0)},0*\xypolyline{};
@i@={(2.6,-2),(2.6,-2.6),(3.8,-2.6),(3.8,-2),(2.6,-2)},0*\xypolyline{};
@i@={(3.4,-1.5),(3.4,-3.4),(5.6,-3.4),(5.6,-1.5),(3.4,-1.5)},0*\xypolyline{};
@i@={(4.2,-3.6),(4.2,-4.2),(5.3,-4.2),(5.3,-3.6),(4.2,-3.6)},0*\xypolyline{};
@i@={(5,-1.2),(5,-5),(5.8,-5),(5.8,-1.2),(5,-1.2)},0*\xypolyline{};
{\ar@{-}@[|2pt] (0,-0.05);(6,-6.05)};(1.35,-0.6)*{v_1};(2.15,-1.4)*{v_2};
(2.95,-2.3)*{v_3};(3.75,-3.0)*{v_4};(4.55,-3.9)*{v_5};(5.4,-4.65)*{v_6};\endxy$
\qquad
$\xy/r1.3pc/:@i@={(1,-0.2),(1,-1),(2.8,-1)},0*[yellow]\xypolyline{*};
@i@={(1.8,0),(1.8,-1.8),(5.3,-1.8)},0*[yellow]\xypolyline{*};
@i@={(2.6,-2),(2.6,-2.6),(3.8,-2.6)},0*[yellow]\xypolyline{*};
@i@={(3.4,-1.5),(3.4,-3.4),(5.6,-3.4)},0*[yellow]\xypolyline{*};
@i@={(4.2,-3.6),(4.2,-4.2),(5.3,-4.2)},0*[yellow]\xypolyline{*};
@i@={(5,-1.2),(5,-5),(5.8,-5)},0*[yellow]\xypolyline{*};
@i@={(1.8,-0.55556),(1.8,-1),(2.8,-1)},0*[Dandelion]\xypolyline{*};
@i@={(3.4,-1.5),(3.4,-1.8),(3.74737,-1.8)},0*[Dandelion]\xypolyline{*};
@i@={(3.4,-2.4),(3.4,-2.6),(3.8,-2.6)},0*[Dandelion]\xypolyline{*};
@i@={(5,-4.0364),(5,-4.2),(5.3,-4.2)},0*[Dandelion]\xypolyline{*};
@i@={(5,-2.8818),(5,-3.4),(5.46316,-3.4),(5.43268,-3.25556)},0*[Dandelion]\xypolyline{*};
@i@={(5,-1.6457),(5,-1.8),(5.12632,-1.8),(5.10523,-1.6998)},0*[Dandelion]\xypolyline{*};
@i@={(1,-0.2),(1,-1),(2.8,-1),(1,-0.2)},0*\xypolyline{};
@i@={(1.8,0),(1.8,-1.8),(5.3,-1.8),(1.8,0)},0*\xypolyline{};
@i@={(2.6,-2),(2.6,-2.6),(3.8,-2.6),(2.6,-2)},0*\xypolyline{};
@i@={(3.4,-1.5),(3.4,-3.4),(5.6,-3.4),(3.4,-1.5)},0*\xypolyline{};
@i@={(4.2,-3.6),(4.2,-4.2),(5.3,-4.2),(4.2,-3.6)},0*\xypolyline{};
@i@={(5,-1.2),(5,-5),(5.8,-5),(5,-1.2)},0*\xypolyline{};
{\ar@{-}@[|2pt] (0,-0.05);(6,-6.05)};(1.35,-0.6)*{v_1};(2.15,-1.48)*{v_2};
(2.9,-2.38)*{v_3};(3.73,-3.15)*{v_4};(4.5,-3.98)*{v_5};(5.33,-4.65)*{v_6};\endxy$
\caption{(from left-to-right) The net $G$, a linear L-system $\calL$ of $G$, the
linear rectangle-system corresponding to $\calL$, and the linear
right-triangle-system corresponding to $\calL$.}
\label{fig:L-rect-tri}
\end{figure}

\begin{theorem}\label{thm:MPT=L}
Max point-tolerance graphs are precisely linear L-graphs.
\end{theorem}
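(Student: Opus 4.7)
The plan is to observe that the correspondence is essentially a direct relabeling: for each vertex $v$, identify the triple $(s_v,p_v,e_v)$ of a pointed interval with the triple $(t_L,c_L,r_L)$ describing an L-shape. That is, given an MPT representation, define an L-shape $L_v$ by setting $t_{L_v}:=s_v$, $c_{L_v}:=p_v$, $r_{L_v}:=e_v$, and in the converse direction do the reverse. The nontrivial content is to match the adjacency conditions. Before starting, I would assume without loss of generality that the points $p_v$ (respectively, the corner x-coordinates) are pairwise distinct; any coincidence $p_u=p_v$ can be removed by a small perturbation of the $p$-values inside $I_u\cap I_v$ that preserves all other incidences, so that eventually one can even rescale to integer corners of the form $(c,-c)$ as stipulated in the preliminaries.

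The key geometric observation to establish is the following: in a linear L-system, two L-shapes $L_u,L_v$ with $c_u<c_v$ intersect if and only if $c_v\le r_u$ and $t_v\le c_u$. This is where the negative-slope condition on the corners does the work. Because the corners lie on $y=-x$, the horizontal arm of $L_u$ sits strictly above that of $L_v$, and the vertical arm of $L_v$ sits strictly to the right of that of $L_u$. Hence the two horizontal arms cannot meet, the two vertical arms cannot meet, and $L_u$'s vertical arm (which lies at $x=c_u<c_v$) cannot meet $L_v$'s horizontal arm (which lies at $y=-c_v<-c_u$ and has $x\ge c_v>c_u$). The only candidate intersection is thus $L_u$'s horizontal arm meeting $L_v$'s vertical arm, and this happens exactly when $c_u\le c_v\le r_u$ and $t_v\le c_u\le c_v$, which under $c_u<c_v$ reduces to $c_v\le r_u$ and $t_v\le c_u$.

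With this in hand, both directions become mechanical. Assuming $p_u<p_v$ (equivalently $c_u<c_v$ after the translation), the MPT edge condition $\{p_u,p_v\}\subseteq I_u\cap I_v$ unpacks to the four inequalities $s_u\le p_u\le e_u$, $s_v\le p_v\le e_v$ (trivially true), plus $p_v\le e_u$ and $s_v\le p_u$. Under the identifications $t_v=s_v$, $c_v=p_v$, $r_v=e_v$, these last two are exactly $c_v\le r_u$ and $t_v\le c_u$, i.e.\ the L-intersection condition just derived. The converse direction is the same chain of equivalences read from right to left, after noting that the constraints $t_L\le c_L\le r_L$ inherent to an L-shape correspond precisely to $s_v\le p_v\le e_v$ required of a pointed interval.

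I expect the only real obstacle to be the intersection-pattern analysis in the second paragraph, since one must argue carefully that under the negative-slope assumption the three potential ``wrong'' intersections are ruled out; once that is clean, the rest is a bookkeeping check. A small secondary obstacle is the justification that coincident $p$-values (or coincident corner coordinates) can be removed without changing the graph, which I would dispatch at the outset with a short perturbation argument so that the main construction may proceed under the distinctness assumption.
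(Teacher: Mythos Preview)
Your proposal is correct and follows essentially the same approach as the paper: the direct identification $(s_v,p_v,e_v)\leftrightarrow(t_{L_v},c_{L_v},r_{L_v})$ together with the observation that adjacency on both sides reduces to ``the horizontal arm of the earlier L meets the vertical arm of the later L.'' The paper's own proof is in fact a one-line sketch that defers the verification to a figure, so your explicit case analysis and perturbation remark supply more detail than the original, but the underlying argument is identical.
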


\begin{proof}
Let $\{(s_1,p_1,e_1), \ldots, (s_n,p_n,e_n)\}$ be a MPT representation of a graph
$G$. Consider the linear L-system $\calL = \{L_1, \ldots, L_n\}$ where $t_{L_i} =
-s_i$, $c_{L_i} = p_i$, and $r_{L_i} = e_i$. 
The theorem follows from the depiction of this construction given in Figure \ref{fig:MPT-L}
\end{proof}

\begin{figure}[ht]
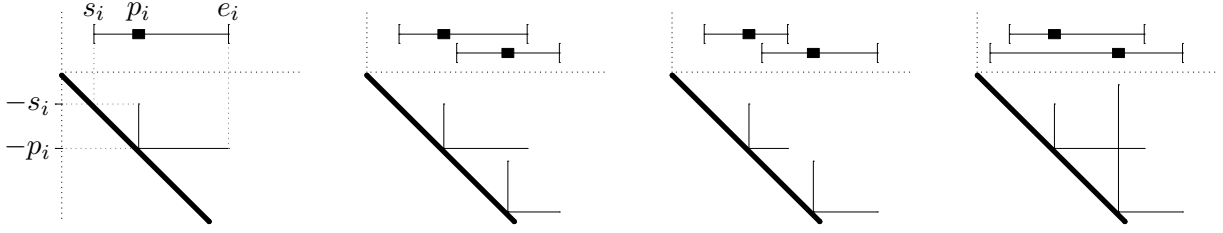
\centering
$\xy/r2pc/:{\ar@{-} (1,-0.3);(1,-1)};{\ar@{-} (1,-1);(2.4,-1)};
{\ar@{-} (0.3,0.8);(2.4,0.8)};{\ar@{-} (0.3,0.95);(0.3,0.65)};
{\ar@{-} (2.4,0.95);(2.4,0.65)};{\ar@{.} (-0.2,0.2);(3.5,0.2)};
{\ar@{.} (-0.2,1.2);(-0.2,-2.1)};(1,0.8)*[F**:black]{\phantom{_s}};
{\ar@{.}@[grey] (0.3,0.65);(0.3,-0.3)};{\ar@{.}@[grey] (-0.2,-0.3);(1,-0.3)};
{\ar@{-} (-0.2,-0.3);(-0.3,-0.3)};{\ar@{.}@[grey] (2.4,0.65);(2.4,-1)};
{\ar@{-} (-0.2,-1);(-0.3,-1)};{\ar@{.}@[grey] (-0.2,-1);(0.95,-1)};
(-0.55,-0.3)*[l]{-s_i};(-0.55,-1)*[l]{-p_i};(0.3,1.15)*{s_i};(1,1.15)*{p_i};
(2.4,1.15)*{e_i};{\ar@{-}@[|2pt] (-0.2,0.15);(2.1,-2.15)};\endxy$
\qquad
$\xy/r2pc/:{\ar@{-}@[|2pt] (-0.2,0.15);(2.1,-2.15)};{\ar@{-} (1,-0.3);(1,-1)};
{\ar@{-} (1,-1);(2.3,-1)};{\ar@{-} (2,-1.2);(2,-2)};{\ar@{-} (2,-2);(2.8,-2)};
{\ar@{-} (0.3,0.8);(2.3,0.8)};{\ar@{-} (0.3,0.95);(0.3,0.65)};
{\ar@{-} (2.3,0.95);(2.3,0.65)};{\ar@{-} (1.2,0.5);(2.8,0.5)};
{\ar@{-} (1.2,0.65);(1.2,0.35)};{\ar@{-} (2.8,0.65);(2.8,0.35)};
{\ar@{.} (-0.2,0.2);(3.5,0.2)};{\ar@{.} (-0.2,1.2);(-0.2,0.2)};
(1,0.8)*[F**:black]{\phantom{_s}};(2,0.5)*[F**:black]{\phantom{_s}};\endxy$
\qquad
$\xy/r2pc/:{\ar@{-}@[|2pt] (-0.2,0.15);(2.1,-2.15)};{\ar@{-} (1,-0.3);(1,-1)};
{\ar@{-} (1,-1);(1.6,-1)};{\ar@{-} (2,-1.2);(2,-2)};{\ar@{-} (2,-2);(3.0,-2)};
{\ar@{-} (0.3,0.8);(1.6,0.8)};{\ar@{-} (0.3,0.95);(0.3,0.65)};
{\ar@{-} (1.6,0.95);(1.6,0.65)};{\ar@{-} (1.2,0.5);(3.0,0.5)};
{\ar@{-} (1.2,0.65);(1.2,0.35)};{\ar@{-} (3,0.65);(3,0.35)};
{\ar@{.} (-0.2,0.2);(3.5,0.2)};{\ar@{.} (-0.2,1.2);(-0.2,0.2)};
(1,0.8)*[F**:black]{\phantom{_s}};(2,0.5)*[F**:black]{\phantom{_s}};\endxy$
\qquad
$\xy/r2pc/:{\ar@{-}@[|2pt] (-0.2,0.15);(2.1,-2.15)};{\ar@{-} (1,-0.3);(1,-1)};
{\ar@{-} (1,-1);(2.4,-1)};{\ar@{-} (2,0);(2,-2)};{\ar@{-} (2,-2);(3.0,-2)};
{\ar@{-} (0.3,0.8);(2.4,0.8)};{\ar@{-} (0.3,0.95);(0.3,0.65)};
{\ar@{-} (2.4,0.95);(2.4,0.65)};{\ar@{-} (0,0.5);(3.0,0.5)};
{\ar@{-} (0,0.65);(0,0.35)};{\ar@{-} (3,0.65);(3,0.35)};
{\ar@{.} (-0.2,0.2);(3.5,0.2)};{\ar@{.} (-0.2,1.2);(-0.2,0.2)};
(1,0.8)*[F**:black]{\phantom{_s}};(2,0.5)*[F**:black]{\phantom{_s}};\endxy$
\caption{Illustrating the equivalence between MPT representations and linear
L-systems. From left-to-right: the L-shape corresponding to a pointed-interval,
two examples of non-adjacent vertices as pointed-intervals and the corresponding
linear Ls, and one example of adjacent vertices as pointed-intervals and the
corresponding linear Ls.}
\label{fig:MPT-L}
\end{figure}

\section{L-systems of Interval Graphs} 
\label{sec:interval}

In this section we connect interval graphs with MPT graphs. 
We do this by demonstrating that every interval representation of a
graph is equivalent to an \emph{anchored} linear L-system (see Definition
\ref{def:anchored} and Proposition~\ref{pro:intervals_as_Ls}). 
Interval graphs have are also observed to be a subclass of MPT graphs
in \cite{c-And2013}. In fact, they claim that rooted path graphs (a superclass 
of interval graphs) are a subclass of MPT graphs, but they do not 
observe our characterization. 
We later use this characterization in our 2-approximation of clique cover
and to identify an infinite family of non-MPT graphs. 

\begin{definition}\label{def:anchored}
A linear L-system $\calL$ is \emph{anchored} if there exists $A \in \reals$ such
that $t_{L} \leq A \leq c_{L}$ for every $L \in \calL$. Note: we say that
$\calL$ is \emph{anchored} at $A$ and refer to $A$ as the \emph{anchor point} of
$\calL$. 
\end{definition}

\begin{proposition}\label{pro:intervals_as_Ls}
$G = (V,E)$ is an interval graph iff $G$ has an anchored linear L-system.
\end{proposition}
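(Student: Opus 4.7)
The plan is to exploit the equivalence $(s_v,p_v,e_v) \leftrightarrow (t_L,c_L,r_L) = (-s_v,p_v,e_v)$ from Theorem~\ref{thm:MPT=L} and reduce everything to the simple observation that in an anchored system the L-shapes ``stab'' a common horizontal line $y = -A$, so the relevant information is carried by the pair $(c_L,r_L)$ alone. The key technical lemma I will establish first is the following: in any linear L-system anchored at $A$, for any two L-shapes $L_u, L_v$ with $c_u < c_v$ we have $L_u \cap L_v \neq \emptyset$ iff $c_v \leq r_u$.

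To prove that lemma, I will examine the two possible types of crossing. The horizontal segment of $L_u$ lies at $y = -c_u$ and spans $x \in [c_u, r_u]$; the vertical segment of $L_v$ lies at $x = c_v$ and spans $y \in [-c_v, -t_v]$. Using $c_u,c_v \geq A \geq t_v$, these two segments cross iff $c_u \leq c_v \leq r_u$, which in our setting reduces to $c_v \leq r_u$. The symmetric crossing (horizontal of $L_v$ against vertical of $L_u$) would force $c_u \geq c_v$, which contradicts $c_u < c_v$ and the distinctness of corner points. Hence the lemma.

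For the reverse implication of the proposition, given an anchored linear L-system $\calL$ of $G$, I assign to each vertex $v$ the interval $I_v = [c_v, r_v]$. By the lemma, two L-shapes intersect iff the corresponding intervals intersect, so $G$ is exactly the interval graph of $\{I_v\}_{v \in V}$.

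For the forward implication, start with an interval representation $\{[a_v, b_v]\}_{v \in V}$ of $G$; by a standard perturbation argument we may assume all left endpoints $a_v$ are distinct. Choose any $A \leq \min_v a_v$ and set $c_v = a_v$, $r_v = b_v$, and $t_v = A$. After the harmless affine normalization used throughout the paper (to place corners on the line $x+y=0$ at integer coordinates), this yields a linear L-system that is anchored at $A$ since $t_v = A \leq c_v$ for every $v$. The lemma then gives that the intersection graph of the L-shapes equals the interval graph of the $[c_v,r_v]=[a_v,b_v]$, which is $G$. The main bookkeeping issue — and the only real obstacle — is ensuring the distinct-corner condition of a linear L-system while preserving the adjacencies at shared endpoints; this is handled by the standard distinct-endpoint normalization of interval representations.
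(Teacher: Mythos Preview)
Your proof is correct and follows essentially the same approach as the paper: both directions use the bijection $L \leftrightarrow [c_L,r_L]$ together with the observation that anchoring renders the top coordinate $t_L$ irrelevant to intersections. The only difference is presentational---you formalize the intersection criterion as an explicit lemma, whereas the paper argues it via a figure.
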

\begin{proof}
$(\Longrightarrow)$ Let $\calI = \{I_1, \ldots, I_n\}$ be an interval
representation of $G$ where $s_{i}$ and $e_{i}$ denote the starting and
ending points of the interval $I_i$ (respectively) for each $i \in \{1, \ldots,
n\}$. Furthermore, (wlog) assume $s_{i}\geq 0$ and $s_{i} < s_{j}$ iff $i < j$. Consider
the linear L-system $\calL = \{L_1, \ldots, L_n\}$ such that $L_i =
(0,s_{i},e_{i})$; i.e., $\calL$ is anchored at 0. Notice that, when two
intervals $I_i,I_j$ (1 $\leq i < j \leq n$) intersect, the corresponding
L-shapes $L_i,L_j$ will also intersect. Specifically, the horizontal segment of
$L_i$ will intersect the vertical segment of $L_j$ (see Figure
\ref{fig:interval-L} (left)). Moreover, when two intervals are disjoint the
corresponding L-shapes will be disjoint since their horizontal segments will not
have any common x-coordinates (see Figure \ref{fig:interval-L} (right)). 

\begin{figure}[ht]
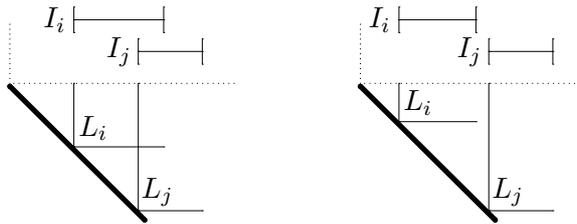
\centering
$\xy/r2pc/:{\ar@{-}@[|2pt] (0,-0.05);(2.1,-2.15)};
{\ar@{-} (1,0);(1,-1)};{\ar@{-} (1,-1);(2.4,-1)};{\ar@{-} (2,0);(2,-2)};
{\ar@{-} (2,-2);(3.0,-2)};{\ar@{.} (0,0);(3.5,0)};{\ar@{.} (0,1);(0,0)};
{\ar@{-} (1,1);(2.4,1)};{\ar@{-} (1,1.2);(1,0.8)};{\ar@{-} (2.4,1.2);(2.4,0.8)};
{\ar@{-} (2,0.5);(3.0,0.5)};{\ar@{-} (2,0.7);(2,0.3)};{\ar@{-} (3,0.7);(3,0.3)};
(0.7,1)*[l]{I_i};(1.7,0.5)*[l]{I_j};(1.3,-0.7)*{L_i};(2.3,-1.7)*{L_j};\endxy$
\qquad\qquad
$\xy/r2pc/:{\ar@{-}@[|2pt] (0,-0.05);(2.1,-2.15)};
{\ar@{-} (0.6,0);(0.6,-0.6)};{\ar@{-} (0.6,-0.6);(1.8,-0.6)};
{\ar@{-} (2,0);(2,-2)};{\ar@{-} (2,-2);(3.0,-2)};{\ar@{.} (0,0);(3.5,0)};
{\ar@{.} (0,1);(0,0)};{\ar@{-} (0.6,1);(1.8,1)};{\ar@{-} (0.6,1.2);(0.6,0.8)};
{\ar@{-} (1.8,1.2);(1.8,0.8)};{\ar@{-} (2,0.5);(3.0,0.5)};
{\ar@{-} (2,0.7);(2,0.3)};{\ar@{-} (3,0.7);(3,0.3)};(0.3,1)*[l]{I_i};
(1.7,0.5)*[l]{I_j};(0.9,-0.3)*{L_i};(2.3,-1.7)*{L_j};\endxy$
\caption{Illustrating the mapping between intervals and Ls for adjacent vertices
(left) and non-adjacent vertices (right).}
\label{fig:interval-L}
\end{figure}

$(\Longleftarrow)$ Let $\calL = \{L_1, \ldots, L_n\}$ be an anchored linear
L-system of $G$. Consider the interval representation $\calI=\{I_1, \ldots, I_n\}$ such
that $I_i = (c_{L_i},r_{L_i})$. The equivalence of $\calI$ and $\calL$ follows
similarly to ($\Longrightarrow$). 
\end{proof}

\begin{corollary}\label{cor:interval_in_mpt}
Interval graphs are a strict subclass of MPT graphs.
\end{corollary}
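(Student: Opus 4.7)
The plan is to obtain the inclusion almost for free from the two preceding results and to establish strictness with a small non-interval example. For the inclusion, Proposition~\ref{pro:intervals_as_Ls} says every interval graph admits an anchored linear L-system; an anchored linear L-system is, in particular, a linear L-system, and Theorem~\ref{thm:MPT=L} identifies linear L-graphs with MPT graphs. Composing these two facts immediately yields that every interval graph is MPT.

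For strict containment, I would exhibit an MPT graph that is not an interval graph. Since interval graphs are chordal, it suffices to realize a chordless 4-cycle as an MPT graph, and the simplest choice is $C_4$. A direct pointed-interval representation does the job: writing the vertices of $C_4$ as $v_1,v_2,v_3,v_4$ with edges $v_1v_2, v_2v_3, v_3v_4, v_4v_1$, one may take $(I_{v_1},p_{v_1})=([0,3],2)$, $(I_{v_2},p_{v_2})=([0,4],1)$, $(I_{v_3},p_{v_3})=([0,4],4)$, and $(I_{v_4},p_{v_4})=([3/2,4],3)$. The verification is a routine check of the six pairs: the four cyclic edges arise because in each case both distinguished points lie in the common interval, while the two diagonals fail because $p_{v_3}=4 \notin I_{v_1}=[0,3]$ and $p_{v_2}=1 \notin I_{v_4}=[3/2,4]$.

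There is essentially no technical obstacle: the inclusion is structural, and the strictness reduces to a single small example. The only real care needed is to actually produce a pointed-interval representation for $C_4$ rather than ``hoping'' one exists, and the representation above can be motivated geometrically by placing four L-shape corners on a line of negative slope in the order $v_2, v_1, v_4, v_3$ (adjusting the vertical arm of the L for $v_4$ so that it barely misses the horizontal arm for $v_2$) and then translating back via Theorem~\ref{thm:MPT=L}. If a more informative witness is desired, one could instead anticipate the later result that every outerplanar graph is MPT, which gives infinitely many non-interval MPT graphs such as $C_n$ for all $n \geq 4$.
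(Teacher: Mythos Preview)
Your proof is correct and follows essentially the same approach as the paper: the inclusion via Proposition~\ref{pro:intervals_as_Ls} and Theorem~\ref{thm:MPT=L} is identical, and for strictness you exhibit a non-interval MPT graph. The only difference is the choice of witness: the paper points to the net (already displayed with a linear L-system in Figure~\ref{fig:L-rect-tri}), whereas you construct an explicit pointed-interval representation of $C_4$, which is arguably an even simpler certificate since non-intervality is immediate from chordality.
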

\begin{proof}
This follows from Proposition~\ref{pro:intervals_as_Ls} and the fact that the graph in Figure \ref{fig:L-rect-tri}
is an MPT graph but not an interval graph \cite{LB1962}.
\end{proof}

\section{Combinatorial Optimization Problems}
\label{sec:comp_opt}

In this section we will discuss the weighted independent set (WIS) problem, clique cover (CC) problem, and the coloring problem on MPT graphs. 
In particular, we will show the WIS problem can be solved in $\calO(n^3)$ time, the CC problem can be 2-approximated in quadratic time, the coloring problem is \NP-complete but can be $\log(n)$-approximated in linear time. 



Throughout this section we consider an MPT graph $G = (V,E)$ together with a
linear L-system $\calL = \{L_1, \ldots, L_n\}$ of $G$ where $i < j$ iff the
corner point of $L_i$ occurs to the left of $L_j$. Without loss of generality
we shall assume that the corner point of $L_i$ is $(i,-i)$ for each
$i \in \{1, \ldots, n\}$; i.e., $p_i=i$ in the corresponding MPT representation
and $L_i = (t_i, i, r_i)$. 

\subsection{Maximum Weight Independent Set}

The IS problem, even for the unweighted case, is known to be \NP-complete for: L-graphs, boxicity-2 graphs, and triangle intersection graphs since they contain the intersection graphs of vertical and horizontal line segments (also known as 2-DIR) and the problem is \NP-complete on 2-DIR \cite{KratochvilN1990}. 
Prior to \cite{KratochvilN1990}, the IS problem was known to be \NP-complete on boxicity-2 graphs \cite{FowlerPT1981,ImaiA1983}. 
However, for interval graphs, the WIS problem is known to be solvable in linear time from a superclass (e.g., \emph{chordal} graphs \cite{Frank1975-chordal-wis}) of interval graphs.
A graph is \emph{chordal} when it has no induced $k$-cycle for all $k \geq 4$.

Notice that an independent set in an MPT graph corresponds to a collection of disjoint L-shapes in a linear L-system. 
We use this equivalence to solve the WIS problem on a vertex-weighted MPT graph in polynomial time algorithm via dynamic programming. 
A close examination of our approach reveals its similarity to an algorithm for WIS on generalizations of interval graphs \cite{Lubiw1991}. The approach in \cite{Lubiw1991} also involves the use of dynamic programming with respect to certain intervals (which we simplify to \emph{dominant} L-shapes) and no specific time bound other than polynomial is claimed. 
However we believe our presentation is much clearer for the context of MPT graphs and it provides a direct time bound of $\calO(n^3)$. 
Also, there has been a recent $\calO(n^2)$ dynamic programming algorithm for this problem \cite{HittingRectangles2014} (this is based on \cite{Lubiw1991}), 
but here we believe that the simplicity of our approach provides insight into the structure of independent sets in MPT graphs and so we have included it.


We now discuss the key idea. Let $\calJ$ be a sub-collection of
disjoint L-shapes of $\calL$. We say that an L-shape $L_i$ is \emph{dominant} in
$\calJ$ if it contains the right-most point among the L-shapes in $\calJ$; i.e.,
$L_i \in \calJ$ and $r_i = \max_{L_j \in \calJ} r_j$.  Consider a dominant $L_i$ 
and some $L_j \in \calJ$ such that $j > i$. 
Notice that $L_j$ cannot contain any points to the right of
the line $x = r_j$ (since $L_i$ is dominant). Moreover,  $L_j$ must occur
strictly below the line $y = -i$ (since $L_j$'s corner point is below $L_i$'s
corner point). Similarly, for $L_{j'} \in \calJ$ with $j' < i$, $L_{j'}$ again
cannot contain any points to the right of the line $x = r_j$. Furthermore,
$L_{j'}$ is contained strictly above the line $y = -i$. Thus, for an L-shape
$L_i$, if $L_i$ is dominant in a sub-collection $\calJ$ of disjoint L-shapes of
$\calL$, then the L-shapes which belong to $\calJ$ and precede $L_i$ can be
chosen independently of the L-shapes which belong to $\calJ$ and follow $L_i$. 


The following notation is depicted in Figure \ref{fig:dominant}. 
For $a,b \in \{1, \ldots, n\}$ and $a \leq b$, let $\calL_{0,n+1} = \calL$ and 
$\calL_{a,b} = \calL_{0,b} \cap \calL_{a,n+1}$ where: 

\begin{itemize*}
\item $\calL_{0,b} = \{ L_i : 1 \leq i \leq b-1$, $r_i < r_b$, and $L_i \cap L_b
= \emptyset\}$; and 
\item $\calL_{a,n+1} = \{ L_i : a+1 \leq i \leq n$, $r_i < r_a$, and $L_i \cap
L_a = \emptyset\}$. 
\end{itemize*}


\begin{figure}[h]
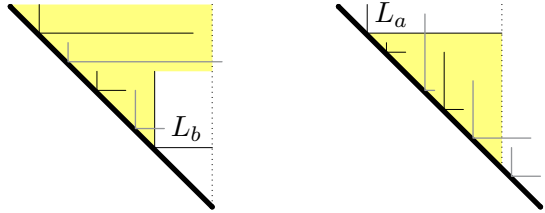
\centering\newxyColor{LightYellow}{1.0 1.0 0.5}{rgb}{}
$\xy/r3pc/:
@i@={(0,0),(1.5,-1.5),(1.5,-0.7),(2.1,-0.7),(2.1,0.0)},0*[LightYellow]\xypolyline{*};
{\ar@{-}@[|2pt] (0,-0.02);(2.1,-2.12)};
{\ar@{.} (2.1,-2.1);(2.1,0.0)};
@i@={(0.3,0),(0.3,-0.3),(1.9,-0.3)},0*\xypolyline{};
@i@={(0.6,-0.4),(0.6,-0.6),(2.2,-0.6)},0*[Gray][|0.5pt]\xypolyline{};
@i@={(0.9,-0.7),(0.9,-0.9),(1.2,-0.9)},0*\xypolyline{};
@i@={(1.5,-0.7),(1.5,-1.5),(2.1,-1.5)},0*\xypolyline{};
@i@={(1.3,-0.9),(1.3,-1.3),(1.6,-1.3)},0*[Gray][|0.5pt]\xypolyline{};
(1.8,-1.3)*[r]{L_b};
\endxy$
\qquad\qquad
$\xy/r3pc/:
@i@={(0.3,-0.3),(1.7,-1.7),(1.7,-0.3)},0*[LightYellow]\xypolyline{*};
{\ar@{-}@[|2pt] (0,-0.02);(2.1,-2.12)};
{\ar@{.} (1.7,-1.7);(1.7,0.0)};
@i@={(0.3,0),(0.3,-0.3),(1.7,-0.3)},0*\xypolyline{};
@i@={(0.5,-0.4),(0.5,-0.5),(0.7,-0.5)},0*\xypolyline{};
@i@={(0.9,-0.1),(0.9,-0.9),(1.0,-0.9)},0*[Gray][|0.5pt]\xypolyline{};
@i@={(1.1,-0.5),(1.1,-1.1),(1.3,-1.1)},0*\xypolyline{};
@i@={(1.4,-0.8),(1.4,-1.4),(2.0,-1.4)},0*[Gray][|0.5pt]\xypolyline{};
@i@={(1.8,-1.5),(1.8,-1.8),(2.1,-1.8)},0*[Gray][|0.5pt]\xypolyline{};
(0.5,-0.1)*[r]{L_a};
\endxy$
\caption{The L-shapes strictly contained in the shaded regions illustrate
$\calL_{0,b}$ (left) and $\calL_{a,n+1}$ (right).}
\label{fig:dominant}
\end{figure}

Let $opt[a,b]$ denote the maximum total weight of a collection of mutually
disjoint L-shapes in $\calL_{a,b}$. Notice that $opt[0,n+1]$ is the maximum
weight of an independent set in $G$. Furthermore, by the above discussion, we
have the following recurrence for $opt[a,b]$:

\par\hfill
$\displaystyle opt[a,b] = \max_{L_i \in \calL_{a,b}} \left(opt[a,i] + w(L_i) + opt[i,b] \right)$
\hfill\null\par\medskip

It is easy to see that the collection of sets $\{\calL_{a,b} : a,b \in
\{0,\ldots,n+1\}, a \leq b\}$ can be computed in $\calO(n^3)$ time (since each of
$\calL_{a,b}$ can be computed in $\calO(n)$ time). Moreover, the size of the table
$opt$ is $\calO(n^2)$, and the time to compute each entry is $\calO(n)$. Thus, we have
the following theorem. 

\begin{theorem}\label{thm:MWIS}
For a vertex weighted MPT graph with a given linear L-system, a maximum weight
independent set can be computed in $\calO(n^3)$ time. 
\end{theorem}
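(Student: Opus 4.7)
The plan is to set up a dynamic programming algorithm whose states are indexed by pairs $(a,b)$ with $0 \le a \le b \le n+1$, where $opt[a,b]$ is the maximum total weight of a pairwise disjoint subcollection of $\calL_{a,b}$. By Theorem~\ref{thm:MPT=L}, pairwise disjoint L-shapes in $\calL$ correspond to independent sets of $G$, so the answer we want is $opt[0,n+1]$. The dominant-L-shape concept is what makes the decomposition work, and the proof amounts to (i) verifying the stated recurrence is correct, and (ii) bounding the running time.

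For correctness, I would fix any optimal disjoint family $\calJ\subseteq\calL_{a,b}$ and let $L_i$ be its dominant L-shape. The crucial geometric claim is that the set $\calJ_{>i}=\{L_j\in\calJ:j>i\}$ lies in $\calL_{i,b}$ while $\calJ_{<i}=\{L_j\in\calJ:j<i\}$ lies in $\calL_{a,i}$. For $L_j\in\calJ_{>i}$, dominance gives $r_j\le r_i$, the ordering gives corner $(j,-j)$ southeast of $(i,-i)$, and disjointness from $L_i$ forces $L_j$ to sit strictly below the horizontal segment of $L_i$; together with $L_j\cap L_b=\emptyset$ and $r_j<r_b$ inherited from $\calJ\subseteq\calL_{a,b}$, this places $L_j$ in $\calL_{i,b}$. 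The symmetric argument handles $\calJ_{<i}$, placing it in $\calL_{a,i}$. Conversely, any disjoint families chosen independently from $\calL_{a,i}$ and $\calL_{i,b}$ can be combined with $\{L_i\}$ into a disjoint family in $\calL_{a,b}$: geometrically, members of $\calL_{a,i}$ live in the northwest region cut off by $L_i$ while members of $\calL_{i,b}$ live in the southeast region, and $L_i$ itself is a separator. This double implication yields exactly
\[
opt[a,b]=\max_{L_i\in\calL_{a,b}}\bigl(opt[a,i]+w(L_i)+opt[i,b]\bigr),
\]
with the convention $opt[a,b]=0$ when $\calL_{a,b}=\emptyset$.

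For the running time, I would first precompute the family $\{\calL_{a,b}\}_{0\le a\le b\le n+1}$: each of the $\calO(n^2)$ sets can be filtered from $\calL$ by checking the three membership conditions in $\calO(n)$ time, for a total of $\calO(n^3)$. Then the table $opt$ has $\calO(n^2)$ entries, which we fill in order of increasing $b-a$; each entry's recurrence scans $\calL_{a,b}$ and so costs $\calO(n)$. The total is $\calO(n^3)$.

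The main obstacle is the decomposition argument in the middle paragraph, since it is easy to overlook whether members of $\calJ_{>i}$ and $\calJ_{<i}$ can secretly intersect across $L_i$; I would handle this by using the linear L-system's negative-slope corner line to argue that $L_i$ literally separates the northwest and southeast regions containing the two subfamilies, so disjointness on each side plus disjointness from $L_i$ already guarantees disjointness of the combined family. Once this separation is in hand, the rest is bookkeeping.
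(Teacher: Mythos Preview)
Your proposal is correct and follows essentially the same approach as the paper: the same subproblems $opt[a,b]$, the same dominant--L-shape decomposition, the same recurrence, and the same $\calO(n^3)$ accounting. You spell out the two directions of the recurrence (splitting an optimal $\calJ$ at its dominant shape, and recombining optimal solutions from $\calL_{a,i}$ and $\calL_{i,b}$ across the separator $L_i$) more explicitly than the paper does, but the argument is the same.
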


\subsection{Clique Cover}

The CC problem is known to be \NP-complete on boxicity-2 graphs (from unit square intersection graphs \cite{FowlerPT1981}), 
and L-graphs (from circle graphs \cite{KeilS2006}). However it is solvable in polynomial time on interval graphs and outerplanar graphs.

In this subsection we describe a polynomial time 2-approximation algorithm for the CC problem on MPT graphs. Our approach uses ideas similar to the algorithm for hitting set in \cite{Chepoi20131036}. 
From our algorithm we will see that the clique cover number $\gamma(G)$ is at most twice the independence number $\alpha(G)$ for any MPT graph $G$. 
Recently it has been observed that a hitting set for a linear rectangle-system can be 2-approximated
in polynomial time \cite{HittingRectangles2014}. Such a hitting set also provides a corresponding 
clique cover of the same size and their proof implies the $2\alpha(G)$ bound.  This proof uses a duality gap argument regarding the difference between the size of a MIS and and the size of a minimum hitting set and is quite different from our approach. Additionally, our approach is faster and simpler. 

Our algorithm begins with the linear L-system $\calL = \{L_1, ..., L_n\}$. Recall that $\calL$ is ordered according to the corner points of the L-shapes. From $\calL$ we greedily select an independent set $I$. We then build a partial clique cover of $G$ with one clique for each element of $I$. Finally, we consider the graph $H$ which remains after removing these cliques and observe that it is an interval graph. Since $H$ is an interval graph we can efficiently compute an optimal clique cover for it. This completes the overview of our algorithm. Notice that, since $H$ will be an interval graph (i.e., a perfect graph), $\gamma(H)=\alpha(H)$. Thus, the size of the clique cover that we produce is $|I| + \alpha(H) \leq 2\alpha(G)$. We now describe our algorithm in detail. 

First we construct the greedy independent set as follows. Let $I_1 = \{L_1\}$, and let $I_i = I_{i-1} \cup \{L_j\}$ such that $L_j$ does not intersect any L-shape in $I_{i-1}$ and $j$ is the smallest index satisfying this property. Let $I = \{L_{i_1}, ..., L_{i_k}\}$ be the maximal independent set constructed in this way such that $i_j < i_{j'}$ whenever $j < j'$. Since $I$ is an independent set in $G$, we can see that $k$ is at most the clique cover number of $G$. We will construct a partial clique cover using $I$ and show that the remaining graph $H$ will be an interval graph. 

To this end, consider the following disjoint sets of vertices. 
For each $j \in \{1, \ldots, k-1\}$, let 
$C_j = \{v_\ell : i_j \leq \ell < i_{j+1}$, and $r_\ell \geq i_{j+1}\}$.
First we claim that each such $C_j$ is a clique, and then we claim that removing all such $C_j$s from $G$ results in an interval graph $H$. 

\medskip
\noindent\textbf{Claim 1:} $C_j$ is a clique. 
\begin{proof}
Consider two vertices in $C_j$. Their corner points occur between the corners of $L_{i_j}$ and $L_{i_{j+1}}$, their top points occur above the corner of $L_{i_j}$ (otherwise one of them would be chosen into $I$ instead of $L_{i_{j+1}}$), and their right points occur to the right of the corner of $L_{i_{j+1}}$. Thus, they must intersect; i.e., $C_j$ is a clique.  
\end{proof}

\noindent\textbf{Claim 2:} $H = G \setminus (\bigcup_{j=1}^{k} C_j)$ is an interval graph. 
\begin{proof}
Consider $v_p$ in $H$ where $i_j \leq p < i_{j+1}$ and $1\leq j<k$. First, due to our construction of $I$, either $v_p = v_{i_j}$ or $v_p$ is a neighbor of some $v_{i_{j'}}$ where $i_{j'} \leq i_j$; i.e., the vertical segment of every such $v_p$ intersects the line $y = i_j$. 
Second, we know that the right-most point of $L_p$ is to the left of $L_{i_{j+1}}$ (since $v_p \notin C_j$).
This implies that every neighbor $v_q$ of $v_p$ in $H$ has $i_j \leq q < i_{j+1}$.
Thus, $H$ induced on its vertices between $v_{i_j}$ and $v_{i_{j+1}}$ is an interval graph (since it has an anchored linear L-system anchored at $i_j$) and is a disjoint union of connected components of $H$. 

The same argument applies to vertices $v_p$ with $i_k\leq p$.
This show that $H$ is the disjoint union of interval graphs; i.e., $H$ itself is an interval graph.
\end{proof}

Notice that the greedy independent set as well as the cliques $C_j$ are easily generated 
in linear time. Moreover, the CC problem on interval graphs can be solved in linear time \cite{HT1991-CC-interval}. 
This leads to the main theorem of this subsection. 

\begin{theorem}\label{thm:clique-cover}
For an MPT graph $G$ the clique cover number is at most twice the independence number. Also, when a linear L-system is given as input, the clique cover $G$ can be 2-approximated in $\calO(n+m)$ time. 
\end{theorem}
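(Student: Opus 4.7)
The plan is to execute the algorithm outlined in the paragraphs preceding the theorem statement and then verify both its correctness and its running time. First I would scan the L-shapes of $\calL$ in left-to-right order of their corner points and greedily construct a maximal independent set $I = \{L_{i_1}, \ldots, L_{i_k}\}$: at each step, pick the L-shape of smallest index not intersecting any previously chosen L-shape. A standard adjacency-list sweep accomplishes this in $\calO(n+m)$ time.

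Next I would form the cliques $C_1, \ldots, C_{k-1}$ as defined immediately before Claim~1. Claim~1 certifies that each $C_j$ is indeed a clique, and the list of $C_j$'s is constructible in $\calO(n)$ time once $I$ is known. By Claim~2, the residual graph $H = G \setminus \bigcup_j C_j$ is an interval graph (a disjoint union of interval-graph ``slabs'' determined by consecutive elements of $I$ together with a final tail past $L_{i_k}$), so its minimum clique cover can be found in $\calO(n+m)$ time by the algorithm of \cite{HT1991-CC-interval}.

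Combining these pieces yields a clique cover of $G$ of total size $(k-1) + \gamma(H)$. Since $I$ is an independent set of $G$, we have $k \leq \alpha(G)$. Since $H$ is an induced subgraph of $G$ and interval graphs are perfect, $\gamma(H) = \alpha(H) \leq \alpha(G)$. Hence $\gamma(G) \leq (k-1) + \gamma(H) \leq 2\alpha(G) - 1 < 2\alpha(G)$, which gives the structural bound. Coupling this with the trivial $\alpha(G) \leq \gamma(G)$ shows that the produced cover has size at most $2\gamma(G)$, i.e., the algorithm is a $2$-approximation. The overall running time is $\calO(n+m)$, dominated by the greedy sweep and the interval clique-cover subroutine.

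The main obstacles are really Claims~1 and~2, both of which have already been stated and verified in the excerpt; once they are granted, the theorem follows from the short counting argument and the time analysis above. The one point worth double-checking in a clean writeup is that every vertex of $G$ is accounted for, either as a member of some $C_j$ or as a vertex of $H$; this is immediate from the definition $H = G \setminus \bigcup_j C_j$ but should be stated explicitly so that what we output is indeed a full clique cover rather than a collection that misses some vertices.
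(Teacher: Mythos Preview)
Your proposal is correct and follows essentially the same route as the paper: greedy left-to-right independent set $I$, the partial clique cover $C_1,\ldots,C_{k-1}$ verified by Claims~1 and~2, then an optimal interval clique cover on the residual graph $H$, with the same $\calO(n+m)$ time analysis. Your count $(k-1)+\gamma(H)\leq 2\alpha(G)-1$ is in fact slightly sharper than the paper's stated $|I|+\alpha(H)\leq 2\alpha(G)$, but the argument is otherwise identical.
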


\subsection{Coloring}

The coloring problem is known to be \NP-complete on L-graphs (since circle graphs, also known as interval overlap graphs, 
are contained in L-graphs \cite{ACGLLS2012-VPG} and coloring circle graphs is
\NP-complete \cite{GJMP1980-circarc+circle}), on boxicity-2 graphs
\cite{ImaiA1983}, and on triangle intersection graphs (since they include planar
graphs \cite{FraysseixMR1994-triangle} and coloring is \NP-complete on planar
graphs \cite{GJ1979}). On the other hand, the coloring problem can be solved in
linear time on interval graphs \cite{G-AlgGraphTheory2004} and outerplanar graphs 
\cite{PrSys1986-outerplanar-col}. 

In this section we will demonstrate that it is \NP-complete to determine the
chromatic number for MPT graphs, but it can be $\log(n)$-approximated in polynomial time. 
We will use $\chi(G)$ to denote the chromatic number of $G$. 

Prior to proving the hardness result we observe that $\chi(G)$ can be $\log(n)$-approximated using known techniques. 
For any boxicity-2 graph $G$, the relationship between the $\chi(G)$ and $\omega(G)$ (the clique number) has been well-studied. 
The best results regarding this relationship are given in \cite{Chalermsook2011}. The relevant result for MPT graphs is as follows. For a boxicity-2 graph $G$ with a rectangle system such that no rectangle contains another, $\chi(G)$ is $\calO(\omega(G)\log(\omega(G)))$ and this $\log(n)$-approximation of $\chi(G)$ can be computed in polynomial time. It is easy to see from our characterization of MPT graphs as linear boxicity-2 graphs, that this result applies directly to MPT graphs. Thus, the chromatic number of MPT graphs can be $\log(n)$-approximated in polynomial time. 

We now turn to the hardness of coloring for MPT graphs. 
To do this we transform the hardness of coloring of circular-arc
graphs to this class. \emph{Circular-arc} graphs are the intersection graphs
of arcs of a circle. 
 Determining a minimum coloring of a circular-arc graph is
known to be NP-hard \cite{GJMP1980-circarc+circle}; i.e., it is 
NP-complete to determine whether a circular arc graph is $k$ colorable when $k$
is part of the input. 

\begin{theorem}\label{thm:chromatic_number}
It is NP-complete to determine the chromatic number for MPT graphs. 
\end{theorem}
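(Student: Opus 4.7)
Membership in \NP\ is immediate: a proposed coloring of the vertices of an MPT graph can be verified in polynomial time by checking each edge. For hardness, my plan is to polynomially reduce $k$-coloring of circular-arc graphs to $k'$-coloring of MPT graphs. The input is a circular-arc graph $H$ given by arcs $A_1, \ldots, A_n$ on a circle of circumference $L$ together with an integer $k$, and I want to produce in polynomial time an MPT graph $G$, an explicit MPT representation of $G$, and an integer $k'$ so that $H$ is $k$-colorable iff $G$ is $k'$-colorable.

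I would begin by choosing a cut point $q$ on the circle and partitioning the arcs into those that contain $q$ (call them \emph{wrapping}) and those that do not. The wrapping arcs pairwise share $q$, so they form a clique and force $k$ to be at least their number. Cutting the circle at $q$ turns each non-wrapping arc into a genuine interval of $\reals$ and splits each wrapping arc into two pieces at the two ends of the unrolled line. The reduction then encodes a coloring of $H$ as a coloring of an MPT graph in which each non-wrapping arc becomes a single pointed interval (via the interval-graph embedding of Proposition~\ref{pro:intervals_as_Ls}) and each wrapping arc is represented by two pointed intervals, one for each of its pieces, plus an auxiliary ``equality gadget'' that forces the two pieces to receive the same color in every valid $k'$-coloring.

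The central obstacle is the design of this equality gadget. Coloring naturally enforces inequality between adjacent vertices, so forcing two specified vertices to share a color is not directly expressible by edges alone; the standard trick is to pin down a palette of $k'-1$ colors on a clique of pre-committed palette vertices so that the two split pieces of a wrapping arc are the only vertices free to use the one remaining color. The challenges are then to (i) realize the palette structure entirely inside a single globally consistent MPT (equivalently, linear L-system) representation, which restricts the geometry in a nontrivial way, (ii) arrange the gadget so that its additional chromatic cost is a fixed constant $c$, allowing $k' = k + c$, and (iii) check that the gadget does not introduce spurious adjacencies with the pointed intervals of the original arcs.

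With such a gadget in hand, the equivalence would follow in two straightforward directions: any $k$-coloring of $H$ extends to a $k'$-coloring of $G$ by assigning the common color to the two pieces of each wrapping arc and using the $c$ additional colors on the palette vertices; conversely, any $k'$-coloring of $G$ induces, via the enforced equality on each pair of pieces, a $k$-coloring of $H$ after restriction to the original vertices. Combined with the polynomial-time construction and membership in \NP, this would establish that determining the chromatic number of an MPT graph is \NP-complete.
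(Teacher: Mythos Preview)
Your overall strategy---reduce from circular-arc coloring by cutting at a point, unrolling to an interval graph, and attaching an equality gadget that forces the two halves of each split arc to share a color---is exactly the approach the paper takes. So the route is right.

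The gap is that you have not actually built the gadget; you flag its design as ``the central obstacle'' and list desiderata (i)--(iii) without resolving them. That is precisely where the work lies, and without it the argument is only a plan. The paper resolves this with a single global clique rather than per-arc gadgets, and with $c=0$ (so $k'=k$). Concretely: if $\ell$ arcs are cut, add new vertices $u_1,\ldots,u_k$ forming a $k$-clique, and make $u_t$ adjacent to both copies $v_i^1,v_i^2$ for every $t>i$. Then $v_1^1,v_1^2$ see the $(k-1)$-clique $\{u_2,\ldots,u_k\}$ and are forced to the single remaining color; since the left halves $v_1^1,\ldots,v_\ell^1$ are pairwise adjacent (and likewise the right halves), an easy induction pushes the equality through all $i$. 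For the MPT realization, the interval graph obtained after cutting has an \emph{anchored} linear L-system (Proposition~\ref{pro:intervals_as_Ls}); the clique $u_1,\ldots,u_k$ is placed with corners above the anchor, horizontal arms long enough to pairwise intersect, and the vertical arm of each $v_i^j$ extended upward to stop just below the horizontal of $u_i$, so it crosses exactly $u_{i+1},\ldots,u_k$. This simultaneously handles your points (i), (ii), and (iii) in one stroke. Your write-up would be complete once you supply this (or an equivalent) explicit construction and verify the geometry.
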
 
\begin{proof} 
Consider a circular-arc graph $G = (V,E)$. We use $n$ and $m$ to denote $|V|$
and $|E|$ respectively. Now, for any $k > 2$, we will construct an MPT graph $G'
= (V',E')$ such that: $|V'|= O(n)$, $|E'|=O(n^2)$, and $\chi(G) \leq k$
iff $\chi(G') \leq k$. Moreover, $G'$ is easily constructed in $\calO(n^2)$ time. An
example of this construction is depicted in Figure \ref{fig:coloring}. 
The basic idea is that we ``cut'' the circular-arc representation at an arbitrary point $p$. This point corresponds to a clique and we split every vertex crossing this point into two vertices so that the result is an interval graph. This interval graph has an anchored linear L-system to which we add a clique consisting of $k$ vertices. This clique will ensure that in any coloring of this constructed graph, the two copies of every split vertex have the same color. We now present the formal proof.

\begin{figure}[t!]
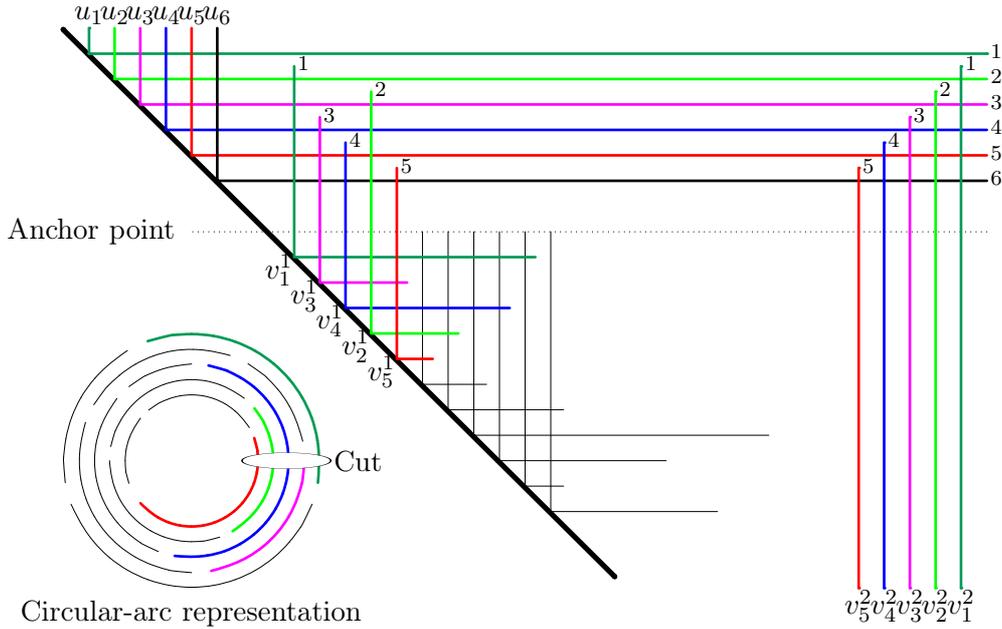
\centering
$\xy/r0.8pc/:
{\ar@{-}@[|2pt] (0,-0.05);(21.5,-21.55)};
{\ar@{-}@[|1pt]@[ForestGreen] (1,0);(1,-1)};
{\ar@{-}@[|1pt]@[ForestGreen] (1,-1);(36,-1)};
@i@={(2,0),(2,-2),(36,-2)},0*[|1pt][green]\xypolyline{};
@i@={(3,0),(3,-3),(36,-3)},0*[|1pt][magenta]\xypolyline{};
@i@={(4,0),(4,-4),(36,-4)},0*[|1pt][blue]\xypolyline{};;
@i@={(5,0),(5,-5),(36,-5)},0*[|1pt][red]\xypolyline{};
@i@={(6,0),(6,-6),(36,-6)},0*[|1pt][black]\xypolyline{};
{\ar@{.} (5,-8);(36,-8)};(5,-8)*[l]{\mbox{Anchor point~~~}};
@i@={(14,-8),(14,-14),(16.5,-14)},0*\xypolyline{};
@i@={(15,-8),(15,-15),(19.5,-15)},0*\xypolyline{};
@i@={(16,-8),(16,-16),(27.5,-16)},0*\xypolyline{};
@i@={(17,-8),(17,-17),(23.5,-17)},0*\xypolyline{};
@i@={(18,-8),(18,-18),(19.5,-18)},0*\xypolyline{};
@i@={(19,-8),(19,-19),(25.5,-19)},0*\xypolyline{};
@i@={(9,-1.5),(9,-9),(18.4,-9)},0*[|1pt][ForestGreen]\xypolyline{};
@i@={(10,-3.5),(10,-10),(13.4,-10)},0*[|1pt][magenta]\xypolyline{};
@i@={(11,-4.5),(11,-11),(17.4,-11)},0*[|1pt][blue]\xypolyline{};
@i@={(12,-2.5),(12,-12),(15.4,-12)},0*[|1pt][green]\xypolyline{};
@i@={(13,-5.5),(13,-13),(14.4,-13)},0*[|1pt][red]\xypolyline{};
{\ar@{-}@[|1pt]@[red] (31,-5.5);(31,-22)};
{\ar@{-}@[|1pt]@[blue] (32,-4.5);(32,-22)};
{\ar@{-}@[|1pt]@[magenta] (33,-3.5);(33,-22)};
{\ar@{-}@[|1pt]@[green] (34,-2.5);(34,-22)};
{\ar@{-}@[|1pt]@[ForestGreen] (35,-1.5);(35,-22)};
(1,0.5)*{u_1};(2,0.5)*{u_2};(3,0.5)*{u_3};(4,0.5)*{u_4};(5,0.5)*{u_5};(6,0.5)*{u_6};
(8.4,-9.5)*{v^1_1};(9.4,-10.5)*{v^1_3};(10.4,-11.5)*{v^1_4};(11.4,-12.5)*{v^1_2};(12.4,-13.5)*{v^1_5};
(31,-22.7)*{v^2_5};(32,-22.7)*{v^2_4};(33,-22.7)*{v^2_3};(34,-22.7)*{v^2_2};(35,-22.7)*{v^2_1};
(9.4,-1.4)*{_1};(10.4,-3.4)*{_3};(11.4,-4.4)*{_4};(12.4,-2.4)*{_2};(13.4,-5.4)*{_5};
(36.4,-0.9)*{_1};(36.4,-1.9)*{_2};(36.4,-2.9)*{_3};(36.4,-3.9)*{_4};(36.4,-4.9)*{_5};(36.4,-5.9)*{_6};
(31.4,-5.4)*{_5};(32.4,-4.4)*{_4};(33.4,-3.4)*{_3};(34.4,-2.4)*{_2};(35.4,-1.4)*{_1};
{(5,-17),{\xypolygon36"A"{~:{(2.6,0):}~>{}~={0}{}}},
{\xypolygon36"B"{~:{(3.2,0):}~>{}~={0}{}}},
{\xypolygon36"C"{~:{(3.8,0):}~>{}~={0}{}}},
{\xypolygon36"D"{~:{(4.4,0):}~>{}~={0}{}}},
{\xypolygon36"E"{~:{(5,0):}~>{}~={0}{}}}};
{"A4";"A14"**\crv{"A5"&"A6"&"A7"&"A8"&"A9"&"A10"&"A11"&"A12"&"A13"&}};
{"A15";"A21"**\crv{"A16"&"A17"&"A18"&"A19"&"A20"&"A20"&}};
{"A23";"A3"**[red][|1pt]\crv{"A24"&"A25"&"A26"&"A27"&"A28"&"A29"&"A30"&"A31"&"A32"&"A33"&"A34"&"A35"&"A36"&"A1"&"A2"&}};
{"B6";"B18"**\crv{"B7"&"B8"&"B9"&"B10"&"B11"&"B12"&"B13"&"B14"&"B15"&"B16"&"B17"&}};
{"B19";"B30"**\crv{"B20"&"B21"&"B22"&"B23"&"B24"&"B25"&"B26"&"B27"&"B28"&"B29"&}};
{"B31";"B5"**[green][|1pt]\crv{"B32"&"B33"&"B34"&"B35"&"B36"&"B1"&"B2"&"B3"&"B4"&}};
{"C10";"C14"**\crv{"C11"&"C12"&"C13"&}};{"C15";"C26"**\crv{"C16"&"C17"&"C18"&"C19"&"C20"&"C21"&"C22"&"C23"&"C24"&"C25"&}};
{"C27";"C9"**[blue][|1pt]\crv{"C28"&"C29"&"C30"&"C31"&"C32"&"C33"&"C34"&"C35"&"C36"&"C1"&"C2"&"C3"&"C4"&"C5"&"C6"&"C7"&"C8"&}};
{"D2";"D7"**\crv{"D3"&"D4"&"D5"&"D6"&}};{"D8";"D16"**\crv{"D9"&"D10"&"D11"&"D12"&"D13"&"D14"&"D15"&}};
{"D17";"D28"**\crv{"D18"&"D19"&"D20"&"D21"&"D22"&"D23"&"D24"&"D25"&"D26"&"D27"&}};
{"D29";"D1"**[magenta][|1pt]\crv{"D30"&"D31"&"D32"&"D33"&"D34"&"D35"&"D36"&}};
{"E13";"E20"**\crv{"E14"&"E15"&"E16"&"E17"&"E18"&"E19"&}};
{"E21";"E35"**\crv{"E22"&"E23"&"E24"&"E25"&"E26"&"E27"&"E28"&"E29"&"E30"&"E31"&"E32"&"E33"&"E34"&}};
{"E36";"E12"**[ForestGreen][|1pt]\crv{"E1"&"E2"&"E3"&"E4"&"E5"&"E6"&"E7"&"E8"&"E9"&"E10"&"E11"&}};
(8.7,-17)*[c][o]+[F**:white]{\hspace*{2.5em}};(11.5,-17)*{\mbox{Cut}};
(5,-23)*{\mbox{Circular-arc representation}};\endxy$
\caption{Sample construction from the proof of Theorem 
\ref{thm:chromatic_number} where the ``cut'' contains $5$ vertices and 
$k=6$.}
\label{fig:coloring}
\end{figure}


Consider an arbitrary circular-arc representation $\mathcal{A}$ of $G$
(such a representation can be constructed in $\calO(n+m)$ time
\cite{McConnell2003-circulararc}). Let $p$ be a fixed point on the circle of
$\mathcal{A}$ and let $\mathcal{A}_p = \{A_1, \ldots, A_\ell\}$ be the arcs of
$\mathcal{A}$ that include $p$. The vertices $\{v_1, \ldots, v_\ell\}$
corresponding to $\mathcal{A}_p$ form a clique in $G$ (since the arcs all share
the point $p$). Hence, if no arcs pass through the point $p$, then $G$ is an
interval graph; i.e., $G$ is an MPT graph and so we can let $G'=G$ and we are
done.  Similarly, if $\ell > k$, then $\chi(G) > k$ and we are done; i.e., we
simply let $G'$ be a clique on $\ell$ vertices. Thus we may assume $1 \leq \ell \leq k$. 

We now form an interval graph $H$ from $G$ by ``cutting'' the circular-arc
representation $\mathcal{A}$ at the point $p$. Formally, for some small enough
$\epsilon > 0$ and each $i \in \{1, \ldots, \ell\}$, we replace the arc $A_i =
(s_i,e_i)$ with two arcs $A^1_i = (s_i, p - \epsilon)$, and $A^2_i = (p+
\epsilon, e_i)$ and consider $H$ as the resulting intersection graph. In
particular, each vertex $v_i$ is replaced by two vertices $v^1_i$ and $v^2_i$
corresponding to the arcs $A^1_i$ and $A^2_i$ respectively. Notice that $|V(H)|
= n + \ell$ and $|E(H)| \leq 2m$.  Since there are no arcs
passing through the point $p$ in this circular-arc representation of $H$, the
graph $H$ is an interval graph. Thus, by Proposition~\ref{pro:intervals_as_Ls}, $H$
has an anchored linear L-system. 

Finally, we add a clique of size $k$ to $H$ so that the result is an MPT graph
$G'$ and in any $k$-coloring of $G'$, the vertices $v^1_i$ and $v^2_i$ must be assigned the
same color. To this end, we define $G' = (V',E')$ as follows:\smallskip

$V' = V(H) \cup \{u_1, \ldots, u_{k}\}$,\smallskip

$E' = E(H) \cup \big\{u_t v^j_i : j \in\{1,2\}, i \in \{1, \ldots, \ell\}, t \in
\{i+1, \ldots, k\}\big\} \cup \big\{u_iu_j : i,j \in \{1, \ldots, k\}, i \neq j\big\}$. 
\medskip

\noindent We show that $G'$ has a $k$-coloring iff $\chi(G) \leq
k$.\smallskip

\noindent$\Longrightarrow$ Notice that the vertices $v^1_1$ and $v^2_1$ are adjacent to
the same clique of size $k-1$ in $G'$. Thus, in any $k$-coloring of $G'$,
$v^1_1$ and $v^2_1$ must be assigned the same color. Inductively, it is easy to
see that $v^1_i$ and $v^2_i$ must also receive the same color in any
$k$-coloring of $G'$. Specifically, $u_i$, $v^1_i$, and $v^2_i$ will receive
the same color for every $i \in \{1, \ldots, \ell\}$.  Thus, any $k$-coloring of
$G'$ provides a $k$-coloring of $G$.

\noindent$\Longleftarrow$ We can extend any
$k$-coloring $f : V(G) \rightarrow \{1, \ldots, k\}$ of $G$ to a $k$-coloring $f'
: V(G') \rightarrow \{1, \ldots, k\}$ of $G'$ as follows. For every $v \in V(G)
\setminus \{v_1, \ldots, v_\ell\}$, set $f'(v) = f(v)$. For each $i \in \{1,
\ldots, \ell\}$, set $f'(u_i)=f'(v^1_i) = f'(v^2_i) = f(v_i)$, and then choose $f'(u_{\ell+1}), \ldots,
f'(u_{k})$ so that $\{f'(u_{\ell+1}), \ldots, f'(u_{k})\} = \{1,\ldots,k\} \setminus
\{f(v_1), \ldots, f(v_{\ell})\}$. It is easy to see that $f'$ is a
$k$-coloring of $G'$. This completes the proof of the claim.\smallskip

All that remains is to show that $G'$ has the appropriate size and that it is an
MPT graph. Notice that $|V(G')| = n + \ell + k \leq 3n$ and $|E(G')| \leq 2m +
{k \choose 2} + (k-\ell)*2\ell + \sum_{t = 1}^{\ell-1}2t \leq
3n^2$. Thus, $G'$ has the appropriate size. Furthermore, we can construct an
MPT representation of $G'$ by starting from an anchored linear L-system of $H$
and adding L-shapes for the new clique ``above'' this anchored linear L-system
(see Figure \ref{fig:coloring}). Thus, $G'$ is an MPT graph. 

From the above construction we can see that determining the chromatic number for
MPT graphs is NP-hard, since it is NP-hard to determine the chromatic
number for circular-arc graphs. 
\end{proof}


This leaves open the $k$-coloring problem for fixed $k\geq 3$.  In particular,
note that in the above construction it was necessary that the number of colors
$k$ was part of the input, since for fixed $k$, the $k$-coloring problem is
solvable in polynomial time on circular-arc graphs
\cite{GJMP1980-circarc+circle}.

\section{Other Characterizations}

In this section we characterize MPT graphs by linear vertex orders, the intersection of interval graphs, and as a restricted class of segment graphs.  

\subsection{Vertex Ordering}

Several well known graph classes have been characterized by special linear
orders on their vertices; e.g., interval graphs (see Definition
\ref{def:i-order} and Theorem \ref{thm:i-order}), unit interval graphs
\cite{Roberts1968-unitinterval}, chordal graphs \cite{Dirac1961-peo}, and
co-comparability graphs \cite{KratschS1993-cocomp}. In this section we
characterize MPT graphs as graphs with MPT-orders (see Definition
\ref{def:mpt-order} and Theorem \ref{thm:mpt-order}). 
This characterization is also stated in \cite{c-And2013}.
We then use this ordering to show that a graph is an MPT graph iff it is the 
intersection of two ``special'' interval graphs 
(see Theorem \ref{thm:2interval-MPT}).

\begin{definition}\label{def:i-order}
An \emph{I-order} of a graph $G$ with vertices $v_1, \ldots, v_n$ is an ordering
$v_1 < v_2 < \cdots < v_n$ such that: for every $u<v<w$, if $uw \in E(G)$, then $uv
\in E(G)$. 
\end{definition}

\begin{theorem}\label{thm:i-order}
\cite{Olariu1991-iorder,RR1988-iorder,Raychaudhuri1987-iorder} $G$ is an
interval graph iff $G$ has an I-order. Moreover for any interval representation
$\calI$ of a graph $G$, ordering the vertices of $G$ by the left end-points
of their intervals results in an I-order of $G$.
\end{theorem}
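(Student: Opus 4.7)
The plan is to prove both directions together, since the forward direction will also establish the ``moreover'' claim. For the forward direction ($\Longrightarrow$), I would take an arbitrary interval representation $\{[s_v,e_v] : v \in V\}$ of $G$ and order the vertices by increasing left endpoint (breaking ties arbitrarily). For any triple $u<v<w$ with $uw \in E(G)$, nonempty intersection $I_u \cap I_w$ forces $s_w \le e_u$; combined with $s_u \le s_v \le s_w \le e_u$, this places $s_v$ inside $I_u$, and trivially $s_v \in I_v$, so $I_u \cap I_v \neq \emptyset$ and $uv \in E(G)$. This simultaneously proves the ``only if'' implication and the ``moreover'' statement.

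For the reverse direction ($\Longleftarrow$), I would construct an interval representation directly from an I-order $v_1 < v_2 < \cdots < v_n$. The natural candidate is $I_{v_i} = [i, r_i]$, where $r_i = \max(\{i\} \cup \{j : v_iv_j \in E(G)\})$; in other words, the right endpoint is the index of the largest-indexed neighbor of $v_i$ (or $i$ itself if $v_i$ has no later neighbor). One direction of equivalence is immediate: if $v_iv_j \in E(G)$ with $i<j$, then $r_i \ge j$ by definition, so $j \in I_{v_i} \cap I_{v_j}$.

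The main obstacle, modest as it is, lies in the converse for this construction: showing that nonempty intersection implies adjacency. If $I_{v_i}\cap I_{v_j}\neq\emptyset$ for $i<j$, then $r_i \ge j$, which exhibits a concrete edge $v_iv_{r_i}$ with $r_i \ge j$. I would then apply the I-order axiom to the triple $v_i < v_j \le v_{r_i}$ to transfer the edge $v_iv_{r_i}$ down to $v_iv_j$ (the case $j = r_i$ is trivial since the edge is already present). This final step is exactly the reason the I-order axiom is formulated in the asymmetric form stated in Definition~\ref{def:i-order}, and it makes the reverse direction go through cleanly.
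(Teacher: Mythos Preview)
Your proof is correct in both directions. The forward direction is the standard verification that left-endpoint ordering yields an I-order, and your reverse construction $I_{v_i}=[i,r_i]$ with $r_i=\max(\{i\}\cup\{j:v_iv_j\in E(G)\})$ works exactly as you describe: the I-order axiom is precisely what transfers the witnessed edge $v_iv_{r_i}$ down to $v_iv_j$.

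Note, however, that the paper does not supply its own proof of this theorem. Theorem~\ref{thm:i-order} is stated as a known result and attributed to \cite{Olariu1991-iorder,RR1988-iorder,Raychaudhuri1987-iorder}; the authors use it as background for their MPT-order characterization (Theorem~\ref{thm:mpt-order}). So there is no in-paper argument to compare against. Your write-up is a clean, self-contained justification of the cited fact and would serve well as an inserted proof; it is also essentially the same construction the paper later generalizes in the $(\Longleftarrow)$ direction of Theorem~\ref{thm:mpt-order}, where the analogous intervals $(s_i,p_i,e_i)$ are built from extremal neighbors.
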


\begin{definition}\label{def:mpt-order}
An \emph{MPT-order} of a graph $G$ with vertices $v_1, \ldots, v_n$ is an ordering
$v_1 < v_2 < \cdots < v_n$ such that: for every $u<v<w<x$, if $uw,vx \in E(G)$,
then $vw \in E(G)$. 
\end{definition}

Notice that MPT-order is a generalization of I-order. In particular, let
$\sigma$ be an I-order of a graph $G$. Now suppose we have $u,v,w,x \in V(G)$
such that $u<_\sigma v<_\sigma w<_\sigma x$ and $uw,vx \in E(G)$. Since $\sigma$
is an I-order with $v<_\sigma w <_\sigma x$ and $vx \in E(G)$, the edge $vw$ is
forced. Thus, $\sigma$ is an MPT-order; i.e., every I-order is also an MPT-order. 
We now prove that MPT graphs are characterized as the graphs with MPT-orders. 

\begin{theorem}\label{thm:mpt-order}
$G=(V,E)$ is an MPT graph iff $G$ has an MPT-order (i.e., the vertices of $G$
can be ordered by $<$ so that for every $u,v,w,x \in V$, if $u<v<w<x$ and $uw,
vx \in E$, then $vw \in E$).  
\end{theorem}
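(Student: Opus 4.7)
The plan is to prove both directions constructively, with the forward direction using the distinguished points of the MPT representation as the ordering, and the reverse direction building an MPT representation directly from the order.

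For the forward direction ($\Longrightarrow$), given an MPT representation $\{(s_v, p_v, e_v)\}_{v \in V}$ of $G$, I would order the vertices by their distinguished points $p_v$ (breaking ties arbitrarily). To verify this is an MPT-order, suppose $u < v < w < x$ with $uw, vx \in E$. Then $uw \in E$ gives $p_u, p_w \in I_u \cap I_w$, and in particular $I_w \supseteq [p_u, p_w]$, which contains $p_v$ since $p_u < p_v < p_w$; similarly $vx \in E$ forces $I_v \supseteq [p_v, p_x]$, which contains $p_w$. Therefore both $p_v \in I_w$ and $p_w \in I_v$, so $vw \in E$.

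For the reverse direction ($\Longleftarrow$), given an MPT-order $v_1 < v_2 < \cdots < v_n$, I would construct a representation by placing $p_{v_i} := i$ and defining the interval $I_{v_i} := [s_i, e_i]$ where
\[
s_i := \min\bigl(\{i\} \cup \{j < i : v_j v_i \in E\}\bigr) \quad\text{and}\quad e_i := \max\bigl(\{i\} \cup \{j > i : v_i v_j \in E\}\bigr).
\]
Then $p_{v_i} \in I_{v_i}$ by construction. For any edge $v_i v_j \in E$ with $i < j$, the definitions immediately give $e_i \geq j$ and $s_j \leq i$, so $\{i,j\} \subseteq I_{v_i} \cap I_{v_j}$. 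The main content is in the converse: I need to show that if $i < j$ with $e_i \geq j$ and $s_j \leq i$, then $v_i v_j \in E$. This is the step where the MPT-order hypothesis does its work.

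To carry out that step, assume such $i < j$. By the definition of $e_i$, there is some $k \geq j$ with $v_i v_k \in E$, and by the definition of $s_j$, there is some $\ell \leq i$ with $v_\ell v_j \in E$. If $k = j$ or $\ell = i$ we are done immediately. Otherwise $\ell < i < j < k$, and I would apply the MPT-order property to the quadruple $(u,v,w,x) = (\ell, i, j, k)$: the hypotheses $uw = v_\ell v_j \in E$ and $vx = v_i v_k \in E$ yield $vw = v_i v_j \in E$, as required. The main (modest) obstacle is just being careful about the degenerate cases $\ell = i$ and $k = j$ in this argument; the rest is bookkeeping.
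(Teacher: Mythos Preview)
Your proposal is correct and follows essentially the same approach as the paper: ordering by the distinguished points for the forward direction, and for the reverse direction setting $p_i=i$ with $s_i$ and $e_i$ given by the extremal neighbor indices, then invoking the MPT-order condition on the witnessing quadruple. The only cosmetic difference is that you argue the non-edge case directly (handling the degenerate cases $k=j$, $\ell=i$ explicitly) whereas the paper phrases it as a contradiction.
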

\begin{proof}
($\Longrightarrow$) Let $\{(s_v,p_v,e_v): v \in V\}$ be an MPT representation of
$G$. Order the vertices of $G$ such that vertex $v$ comes before vertex $u$ if
$p_{v}\leq p_{u}$. Now, consider any four distinct vertices $u,v,w,x$ where
$u<v<w<x$ and $uw,vx \in E$. Then, it is easy to realize that, due to the
considered ordering, it holds that $s_{w} \leq p_{u} \leq p_{v}$ and $e_{v} \geq
p_{x} \geq p_{w}$, which implies $vw \in E$. 	

($\Longleftarrow$) Let $G=(V,E)$ be a graph with ordered vertex set $V = \{v_1,
\ldots, v_n\}$ such that for any $i,j,k,\ell \in \{1,\ldots,n\}$, if $i < j < k <
\ell$ and $v_{i}v_k,v_{j}v_\ell \in E$ then $v_{k}v_j \in E$ (i.e., $v_1 <
\cdots < v_n$ is an MPT-order). We now construct an MPT representation of $G$ based on
this ordering. For each $i \in \{1, \ldots, n\}$, let: \begin{itemize*} \item $s_i
= \min\{i,j\}$ where $j$ is the \textbf{smallest} index such that $v_jv_i$ is an
edge in $G$.  \item $p_i = i$ \item $e_i = \max\{i,j\}$ where $j$ is the
\textbf{largest} index such that $v_iv_j$ is an edge in $G$.  \end{itemize*}
Clearly $\calI = \{(s_i,p_i,e_i) : i \in \{1, \ldots, n\}\}$ is an MPT
representation in which every edge of $G$ is captured. Now we need to
demonstrate that this representation does not include any edges which are not
edges of $G$. Suppose that for some $j,k \in \{1,\ldots,n\}$, $j<k$, $v_jv_k \notin
E$ but $s_k \leq p_j$ and $e_j \geq p_k$. Since $s_k \leq p_j$ there must be
$v_i$ with $i < j$ such that $v_iv_k \in E$. Similarly, there must be $v_\ell$
with $\ell > k$ such that $v_jv_\ell \in E$. However, we now have $i < j < k <
\ell$ with $v_iv_k, v_jv_\ell \in E$ but $v_jv_k \notin E$; i.e., a
contradiction to the vertex order. Thus $\calI$ is an MPT representation of $G$. 
\end{proof}

Notice that, since every I-order is an MPT-order and every graph with an
MPT-order is an MPT graph, we have an alternate proof of Corollary
\ref{cor:interval_in_mpt}; i.e., that every interval graph is an MPT graph.
Also, since the order of vertices in an MPT-order corresponds to the order of
the points in an MPT representation, they also correspond to the order of the
corner points in a linear L-system of an MPT graph. 

We conclude this section by further characterizing MPT graphs as the
intersection of two related interval graphs.

\begin{theorem}\label{thm:2interval-MPT}
$G=(V,E)$ is an MPT graph with MPT-order $\sigma = (v_1 < \ldots < v_n)$ iff there
are interval graphs $H_1 = (V,E_1)$ and $H_2 = (V,E_2)$ such that $E = E_1 \cap
E_2$, $\sigma$ is an I-order of $H_1$, and the reverse of $\sigma$ (i.e., $v_n <
\cdots < v_1$) is an I-order of $H_2$. 
\end{theorem}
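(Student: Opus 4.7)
The plan is to split the MPT adjacency condition into its two natural ``halves'' -- one capturing the left part of each pointed-interval and one capturing the right part -- and then realize each half as an interval graph. Concretely, given an MPT graph $G$ with MPT-order $\sigma = (v_1 < \cdots < v_n)$, I would first invoke the construction in the ($\Longleftarrow$) direction of Theorem \ref{thm:mpt-order} to produce an MPT representation $\{(s_i,i,e_i) : 1 \leq i \leq n\}$. For this representation, $v_iv_j \in E$ (with $i<j$) iff $s_j \le i$ \emph{and} $e_i \ge j$. This suggests defining $H_1$ as the interval graph with intervals $\{[i,e_i]\}$ and $H_2$ as the interval graph with intervals $\{[s_i,i]\}$. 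A direct check (two intervals $[i,e_i]$ and $[j,e_j]$ with $i<j$ intersect iff $e_i \ge j$, and analogously for $H_2$) confirms $E(H_1) \cap E(H_2) = E$.

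For the I-order properties, $\sigma$ orders the intervals of $H_1$ by their left endpoints, so Theorem \ref{thm:i-order} delivers an I-order; symmetrically, the reverse of $\sigma$ orders the intervals of $H_2$ by right endpoints, which after reflecting $\reals$ becomes left endpoints of a valid interval representation of $H_2$, so the reverse of $\sigma$ is an I-order of $H_2$. This handles the forward direction.

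For the converse, suppose $H_1$ and $H_2$ with the stated I-orders exist and $E = E_1 \cap E_2$. By Theorem \ref{thm:mpt-order}, it suffices to show $\sigma$ is an MPT-order of $G$. So take $u < v < w < x$ with $uw, vx \in E \subseteq E_1 \cap E_2$. Applying the I-order of $H_1$ to the triple $v < w < x$ with the edge $vx \in E_1$ yields $vw \in E_1$. For $H_2$, translate into the reverse order, where the chain becomes $x <' w <' v <' u$; applying the I-order of $H_2$ to the triple $w <' v <' u$ with the edge $uw \in E_2$ yields $vw \in E_2$. Hence $vw \in E_1 \cap E_2 = E$, as required.

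I do not anticipate a hard step: the whole argument is essentially bookkeeping, once one sees that the MPT adjacency $s_v \le p_u$ and $e_u \ge p_v$ separates cleanly into two interval-graph conditions. The most delicate point is just being careful about which endpoint becomes the ``left'' one under reflection, so that the reverse of $\sigma$ is genuinely an I-order of $H_2$ rather than only of its complement or of some reflected copy.
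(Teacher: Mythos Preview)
Your proof is correct and follows essentially the same route as the paper. The paper phrases the forward direction geometrically---extending the horizontal (resp.\ vertical) segments of the linear L-system $\calL$ from Theorem~\ref{thm:mpt-order} to obtain two anchored L-systems---but under Proposition~\ref{pro:intervals_as_Ls} these anchored systems are exactly your interval families $\{[s_i,i]\}$ and $\{[i,e_i]\}$; the converse direction is identical to yours.
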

\begin{proof}
($\Longrightarrow$) Let $\sigma=v_1<\cdots<v_n$ be an MPT-order of $G$, and let
$\calL$ be the linear L-system of $G$ constructed in the proof of Theorem
\ref{thm:mpt-order} using this order; i.e., $c_{v_i} < c_{v_j}$ iff $i<j$.
Construct an anchored linear L-system $\calL_1$ by extending the horizontal
segment of every L-shape in $\calL$ to the right beyond the corner of the
right-most L-shape in $\calL$. Similarly, construct an anchored linear L-system
$\calL_2$ by extending the vertical segment of every $L$ in $\calL$ so that it
reaches above the corner of the left-most L-shape in $\calL$. By Proposition~\ref{pro:intervals_as_Ls}, each of $\calL_1$ and $\calL_2$ corresponds to
an interval representation.  Let $H_1 = (V,E_1)$ and $H_2 = (V,E_2)$ be the
interval graphs specified by $\calL_1$ and $\calL_2$. Notice that, by Theorem
\ref{thm:i-order}, $\sigma$ is an I-order of $H_1$ and the reverse of $\sigma$
is an I-order of $H_2$. Thus, we just need to ensure that $E_1 \cap E_2 = E$.
Clearly $E \subseteq E_1 \cap E_2$ from our construction of $\calL_1$ and
$\calL_2$. Moreover there can be no edge in both $E_1$ and $E_2$ which is not in
$E$ simply due to how these ``extra'' edges come into existence (see Figure
\ref{fig:2interval-MPT}).

\begin{figure}[ht]
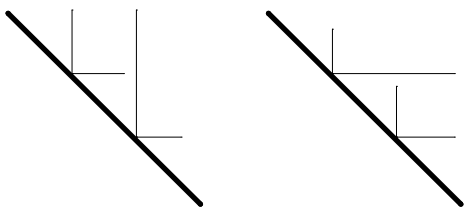
\centering
$\xy/r2pc/:{\ar@{-}@[|2pt] (0,-0.05);(3,-3.05)};{\ar@{-} (1,0);(1,-1)};
{\ar@{-} (1,-1);(1.8,-1)};{\ar@{-} (2,0);(2,-2)};{\ar@{-} (2,-2);(2.7,-2)};\endxy$
\qquad
$\xy/r2pc/:{\ar@{-}@[|2pt] (0,-0.05);(3,-3.05)};{\ar@{-} (1,-0.3);(1,-1)};
{\ar@{-} (1,-1);(2.9,-1)};{\ar@{-} (2,-1.2);(2,-2)};{\ar@{-} (2,-2);(2.9,-2)};
\endxy$
\caption{(left) $L_u,L_v$ such that $uv \in E_1 \setminus E$. (right) $L_u,L_v$
such that $uv \in E_2 \setminus E$.}
\label{fig:2interval-MPT}
\end{figure}

($\Longleftarrow$) Let $H_1 = (V,E_1)$ and $H_2 = (V,E_2)$ such that $V = \{v_1,
\ldots v_n\}$, $\sigma = (v_1 < \cdots < v_n)$ is an I-order of $H_1$, and the
reverse of $\sigma$ is an I-order of $H_2$. We now claim that $\sigma$ is an
MPT-order of $G = (V, E_1 \cap E_2)$. Consider $1\leq i<j<k<\ell \leq n$ where
$v_iv_k,v_jv_\ell \in E_1 \cap E_2$. Notice that $v_jv_k \in E_1$ since $\sigma$
is an I-order of $H_1$ and $v_jv_\ell \in E_1$. Similarly, $v_jv_k \in E_2$
since the reverse of $\sigma$ is an I-order of $H_2$ and $v_iv_\ell \in E_2$.
Thus, $v_jv_k \in E_1 \cap E_2$ as needed. 
\end{proof}

\subsection{Cyclic Segment Graphs}

In this section we characterize MPT graphs as intersection graphs of line segments from a cyclic line arrangement. A \emph{line arrangement} is simply a collection of lines in the plane (see \cite{felsner2004geometric} for more on line arrangements). A line arrangement $\calA$ is \emph{cyclic} when there is a convex function $f$ (e.g., a parabola) such that every line in $\calA$ is tangent to $f$. We define \emph{cyclic segment graphs} as the intersection graphs of line segments where the underlying line arrangement is cyclic with respect to some function $f$ and each segment contains a point on $f$. 
In the following theorem we prove that cyclic segment graphs are precisely MPT graphs. This follows easily from our characterization of MPT graphs via MPT-orders (see Theorem \ref{thm:mpt-order}). 

\begin{theorem}\label{thm:MPT=cyclic-seg}
MPT graphs are precisely cyclic segment graphs.
\end{theorem}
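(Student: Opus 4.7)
The plan uses the MPT-order characterization of Theorem \ref{thm:mpt-order} for the backward direction, and directly constructs a cyclic segment representation from the MPT representation for the forward direction. The central geometric fact is a monotonicity property of tangent-line crossings: for a strictly convex function $f$ and three tangent points $a<b<c$, writing $q_{st}$ for the $x$-coordinate of $\ell_s \cap \ell_t$, we have $q_{ab} < q_{ac} < q_{bc}$. To see this, consider the upper envelope of $\ell_a, \ell_b, \ell_c$: at its own tangent point each $\ell_s$ strictly dominates the other two lines (by strict convexity), so along the envelope the three lines are maximal in the order $\ell_a, \ell_b, \ell_c$ as $x$ increases. Hence the envelope breakpoints are $q_{ab}$ and $q_{bc}$, and the remaining pairwise crossing $q_{ac}$ must lie in the middle region where $\ell_b$ dominates, yielding $q_{ab} < q_{ac} < q_{bc}$.

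For the forward direction, take an MPT representation $\{(s_i, p_i, e_i)\}$ with $p_1 < \cdots < p_n$ and set $f(x) = x^2$, so that the tangent lines $\ell_i$ at $(p_i, p_i^2)$ satisfy $q_{ij} = (p_i + p_j)/2$. Define the segment $\sigma_i$ to be the portion of $\ell_i$ whose $x$-coordinates range over $\bigl[(s_i+p_i)/2,\,(p_i+e_i)/2\bigr]$. Since $s_i \le p_i \le e_i$, the tangent point lies in $\sigma_i$, so $\{\sigma_i\}$ is a valid cyclic segment system. For $i<j$, the segments $\sigma_i$ and $\sigma_j$ meet iff they both contain the crossing at $x = (p_i+p_j)/2$, which simplifies to $e_i \ge p_j$ and $s_j \le p_i$ --- exactly the MPT adjacency condition.

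For the backward direction, let $\{\sigma_i\}$ be a cyclic segment representation of $G$ with convex function $f$, and order the vertices by the $x$-coordinates $p_i$ of their tangent points. Writing $[L_i, R_i]$ for the $x$-range of $\sigma_i$ (so $L_i \le p_i \le R_i$ since $\sigma_i$ must contain its tangent point), the condition $\sigma_i \cap \sigma_j \ne \emptyset$ for $i<j$ reduces to $L_j \le q_{ij} \le R_i$ (the other two constraints being automatic from $L_i \le p_i < q_{ij} < p_j \le R_j$). Now consider $u<v<w<x$ with $uw, vx \in E$, so that $L_w \le q_{uw}$ and $R_v \ge q_{vx}$. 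Applying the monotonicity lemma to the triples $u<v<w$ and $v<w<x$ yields $q_{uw} < q_{vw}$ and $q_{vw} < q_{vx}$, respectively, so $L_w \le q_{uw} < q_{vw}$ and $R_v \ge q_{vx} > q_{vw}$, giving $\sigma_v \cap \sigma_w \ne \emptyset$ and hence $vw \in E$. Thus the chosen ordering is an MPT-order, and $G$ is an MPT graph by Theorem \ref{thm:mpt-order}. The main subtlety throughout is the monotonicity lemma; once it is in hand, both directions drop out cleanly.
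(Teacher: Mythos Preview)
Your proof is correct and follows essentially the same route as the paper: map vertices to tangent segments of a parabola, use the order of tangent points as the MPT-order, and exploit the monotonicity of pairwise tangent-line crossings. Your version is somewhat more explicit---you state and justify the monotonicity lemma $q_{ab} < q_{ac} < q_{bc}$ directly, verify the forward direction by an exact computation with $q_{ij} = (p_i+p_j)/2$ rather than arguing by contradiction, and you actually write out the backward direction (the paper only says it ``follows similarly'')---but the underlying idea is the same.
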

\begin{proof}
Let $\sigma = (v_1 < \ldots < v_n)$ be an MPT-order of an MPT graph $G$. We will construct a cyclic segment representation of $G$ by mapping each vertex to a segment of a line tangential to the parabola $y=x^2$. 
First, we assign each $v_i$ the tangent line $\ell_i$ of the parabola for the point $(i,i^2)$. 
Now, to choose the segment of $\ell_i$ for the vertex $v_i$ we consider the left-most and right-most vertices, say $v_{i_{\min}}$ and $v_{i_{\max}}$, from $N(v_i) \cup \{v_i\}$. In particular, we let the segment $S_i$ for $v_i$ be defined as the segment of $\ell_i$ starting from $\ell_{i_{\min}}$ and ending on $\ell_{i_{\max}}$. Note, if $i = i_{\min}$ ($i = i_{\max}$) then we simply use the point $(i,i^2)$ as the starting (ending) point of the segment $\ell_i$. 
Clearly each $S_i$ passes through the point $(i,i^2)$. Thus, we have constructed a valid cyclic segment representation. Consider an edge $v_iv_j$ of $G$ with $i<j$. From our construction, $S_i$ passes through the line $\ell_j$ in order to reach $S_{i_{\max}}$. Similarly, $S_j$ passes through the line $\ell_i$ in order to reach $S_{j_{\min}}$. Thus, $S_i$ and $S_j$ intersect. Now, suppose that $S_i$ and $S_j$ intersect ($i < j$), but $v_iv_j$ is not an edge of $G$. In order for these segments to intersect, each must need to ``reach over'' the other. In particular, this means that there is $v_p$ and $v_q$ such that $p < i < j < q$, and $v_iv_q$ and $v_jv_p$ are edges in $G$; i.e., this violates the MPT-order. Therefore, every MPT graph has a cyclic segment representation. 

To construct an MPT-order from a cyclic segment representation one simply uses the order of the tangent points and the proof follows similarly to the above.
\end{proof}

With our characterization established we note that it may be interesting to consider generalizations in this context. In particular, one might consider intersection graphs of line segments which are tangent to convex bodies or unimodal functions in $\mathbb{R}^2$. 

\section{Non-MPT graphs and More Subclasses of MPT graphs}
In this section we observe two additional strict subclasses of MPT graphs (namely, outerplanar graphs and 2D ray graphs). We further observe infinite families of graphs which are not MPT graphs. 

\subsection{Outerplanar graphs}

In this section we consider outerplanar graphs as a restricted form of MPT graphs. In particular, we consider linear L-contact-systems and demonstrate that the graphs of these contact systems are precisely outerplanar graphs.
It has been independently stated that outerplanar graphs are a subclass of MPT graphs \cite{c-And2013}. Their proof is completely different from ours and does not provide the characterization we have observed.

A graph is \emph{outerplanar} if it has a crossing-free embedding in the plane such that all vertices are on the same face. Moreover, an outerplanar graph is said to be maximal when it is not a proper subgraph of any outerplanar graph with the same number of vertices. We will demonstrate that outerplanar graphs are precisely the \emph{linear contact L-graphs} (see Definition \ref{def:contact-L} and Theorem \ref{thm:outerplanar}). For more information on contact L-graphs see \cite{ChUe2013,ChKoUe2013,DBLP:conf/soda/KobourovUV13}.

\begin{definition}\label{def:contact-L}
A graph $G$ is a \emph{linear L contact graph} when it has a linear L-system $\calL$ such that no two L-shapes ``cross-over'' each other; i.e., for L-shapes $L_u = (t_u,c_u,r_u), L_v = (t_v,c_v,r_v)$, if $L_u \cap L_v \neq \emptyset$ and $c_u < c_v$, then either $c_u = t_v$ or $r_u = c_v$. 
Moreover, we say such an L contact system is \emph{equilateral} when, for each L-shape, the vertical and horizontal segments have the same length. 
\end{definition} 

We will use the following characterization of maximal outerplanar graphs related to 2-trees (which follows easily from \cite{Sandr-1979-outerplanar}). A \emph{2-tree} is a graph that can be constructed by starting from an edge and iteratively adding vertices with exactly two adjacent neighbors. Semi-squares will also play a role throughout this section. A \emph{semi-square} is a right-triangle whose vertical and horizontal sides are the same length (e.g., the lower-left ``half'' of a square). Notice that, there are four types of semi-squares depending on the choice of corner: lower-left (ll), lower-right (lr), upper-left (ul), and upper-right (ur). It is known that max tolerance graphs are precisely semi-square intersection graphs where every semi-square has the same type (e.g., max tolerance graphs are precisely the ll-semi-square intersection graphs) \cite{Kaufmann2006-MaxTol-SODA}.

\begin{theorem}[\cite{Sandr-1979-outerplanar}]
Let $G$ be a maximal outerplanar graph. For any edge $v_1v_2$ of the outerface of $G$, the vertices of $G$ can be ordered $v_1, \ldots, v_n$ such that $v_i$ ($2 < i \leq n$) has exactly two neighbors, $u$ and $v$, in $G_{i-1} = G[\{v_1, \ldots, v_{i-1}]$ and $uv$ is and edge of $G$. We refer to such an order as an \emph{outerplanar-order}.
\end{theorem}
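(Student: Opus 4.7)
The plan is to prove this statement by induction on $n$, leveraging the structural fact that every maximal outerplanar graph has enough ``ears'' (degree-$2$ vertices whose two neighbors are adjacent) to choose from. For the base case $n=3$, the graph $G$ is a triangle and setting $v_3$ to be the unique vertex outside the edge $v_1v_2$ trivially satisfies the conclusion.

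For the inductive step with $n \geq 4$, I would first invoke the standard fact that a maximal outerplanar graph on $n$ vertices is a triangulation of an $n$-gon whose outer boundary is a Hamilton cycle, and hence has exactly $n-2$ inner triangular faces. The dual graph of these inner faces (one node per inner triangle, one edge per shared interior diagonal) forms a tree on $n-2 \geq 2$ nodes, which therefore has at least two leaves. Each leaf corresponds to an ear triangle sharing only one interior diagonal with the rest of the triangulation, and the apex of such a triangle is a degree-$2$ vertex whose two neighbors form an edge (namely, the shared diagonal).

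Since there are at least two ear vertices, at least one of them, call it $v_n$, is distinct from both $v_1$ and $v_2$. I would place $v_n$ last in the ordering and apply induction to $G' = G \setminus v_n$. The graph $G'$ is again a maximal outerplanar graph on $n-1$ vertices, since removing an ear triangle preserves both outerplanarity and maximality (the interior diagonal of the removed triangle becomes a new boundary edge). Moreover, $v_1v_2$ remains on the outer face of $G'$ because $v_n \notin \{v_1, v_2\}$, so the outer boundary of $G'$ differs from that of $G$ only along the two deleted boundary edges incident to $v_n$. The inductive hypothesis then produces an ordering $v_1, v_2, v_3, \ldots, v_{n-1}$ of $G'$ with the required property, and appending $v_n$ yields the desired outerplanar-order of $G$.

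The main obstacle is precisely the guarantee that an ear vertex different from both $v_1$ and $v_2$ exists; this is exactly why counting at least two leaves of the dual tree is essential (rather than the mere existence of a single ear). A minor subtlety, namely verifying that the distinguished edge $v_1v_2$ remains on the outer face after deletion, is immediate once $v_n$ is chosen outside $\{v_1, v_2\}$, so no additional case analysis on the embedding is needed.
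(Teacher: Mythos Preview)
Your argument is correct. The induction via the weak dual tree of the triangulation is the standard route: the dual of the inner faces of a maximal outerplanar graph on $n\geq 4$ vertices is a tree on $n-2\geq 2$ nodes, hence has at least two leaves, each giving an ear vertex; choosing one distinct from $v_1,v_2$ and peeling it off preserves maximal outerplanarity and keeps $v_1v_2$ on the outer face, so induction goes through cleanly.

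There is nothing to compare against in the paper itself: the theorem is stated with a citation to Mitchell~\cite{Sandr-1979-outerplanar} and is not proved here (the paper only remarks that the characterization ``follows easily'' from that reference). Your proof is exactly the kind of argument one would expect to fill that citation.
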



\begin{theorem}\label{thm:outerplanar}
Every maximal outerplanar graph $G$ is a linear equilateral-L contact graph. 
\end{theorem}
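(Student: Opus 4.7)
My plan is to prove the statement by induction on $n$ following the outerplanar-order $v_1, v_2, \ldots, v_n$ given by the preceding theorem. The base case $n = 2$ is immediate: place two equilateral L-shapes with corners on $y = -x$ so that the right endpoint of $L_{v_1}$ coincides with the corner of $L_{v_2}$, giving exactly one contact.

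For the inductive step I will strengthen the hypothesis by maintaining the following geometric invariant $(\star)$: along with the linear equilateral-L contact representation of $G_{n-1}$, for every edge $xy$ of $G_{n-1}$ (with $c_x < c_y$) that serves as the attachment edge of some later $v_j$ in the outerplanar-order, the open right-isosceles triangle $T_{xy}$ with vertices $(c_x, -c_x)$, $(c_y, -c_x)$, $(c_y, -c_y)$ contains no segment of any already-placed L-shape. The top side of $T_{xy}$ lies on the horizontal segment of $L_x$, its right side on the vertical segment of $L_y$, and its hypotenuse on the line $y = -x$; the invariant holds vacuously in the base case.

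When adding $v_n$ with attachment edge $uv$ (WLOG $c_u < c_v$), invariant $(\star)$ tells us $T_{uv}$ is empty. I will place $L_n$ as the equilateral L with corner $c_n = (c_u + c_v)/2$, top $t_n = c_u$, and right endpoint $r_n = c_v$, which has height $(c_v - c_u)/2$. Its corner sits on $y = -x$ strictly between $c_u$ and $c_v$; its vertical segment meets $L_u$'s horizontal at $(c_n, -c_u)$ (case (b) contact, $t_n = c_u$), and its horizontal segment meets $L_v$'s vertical at $(c_v, -c_n)$ (case (a) contact, $r_n = c_v$). By $(\star)$ no other L-shape enters $T_{uv}$, so these are the only contacts of $L_n$. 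The two segments of $L_n$ partition $T_{uv}$ into two smaller right-isosceles triangles --- precisely $T_{uv_n}$ and $T_{v_n v}$ for the newly created outer edges --- plus a central square bounded by the segments of $L_n$; the two sub-triangles inherit emptiness from $T_{uv}$, so $(\star)$ is restored for the next step.

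The main technical obstacle is to verify that the invariant truly persists across all inductive steps, in particular that the segments of $L_n$ (placed inside $T_{uv}$) do not intrude on the triangles $T_{x'y'}$ of any other pending attachment edge. This will follow from the observation that at every stage the pending triangles correspond to the leaves of the nested binary subdivision generated by the construction, and these leaves project to pairwise disjoint intervals on $y = -x$; since the segments of $L_n$ remain inside the closed triangle $T_{uv}$, they cannot reach any other pending region. Finally, to respect the convention that corner points are positive integers of the form $(c,-c)$, I will scale the construction by using initial corners $c_1 = 2^{n-2}$ and $c_2 = 2^{n-1}$, which ensures that every midpoint produced by the recursion is itself a positive integer.
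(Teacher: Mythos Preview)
Your proof is correct and follows essentially the same approach as the paper: induction along an outerplanar-order, maintaining an empty upper-right isosceles triangle for each outerface edge, and inserting the new equilateral L at the midpoint $c_n=(c_u+c_v)/2$ with $t_n=c_u$, $r_n=c_v$. Your explicit disjointness argument for the pending triangles and the $2^{n-2}$ scaling to force integer corners are minor additions not present in the paper, but the core construction and invariant are identical.
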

\begin{proof}
Consider an outerplanar order $v_1,v_2, \ldots, v_n$ of $G$. We iteratively build the linear equilateral-L contact system as follows. 
Let $L_{v_1} = (-1,0,1)$ and $L_{v_2} = (0,1,2)$ be the L-shapes for $v_1$ and $v_2$ respectively. Clearly $L_{v_1}$ and $L_{v_2}$ contact each other at the point $(1,0)$, both terminate at this point, are equilateral, and their corner points lie on the line $y=-x$. Moreover, the ur-semi-square defined by the points $(0,0), (1,0), (1,-1)$ is: 
\begin{itemize}
\item
empty (i.e., it is internally disjoint from all the $L$'s we have so far), 
\item
its diagonal is a segment of the line $y=-x$, and 
\item the point $(1,0)$ is the point of contact between $L_{v_1}$ and $L_{v_2}$. 
\end{itemize}
Now assume that we have a linear equilateral-L contact system $\calL_{i-1}$ for $v_1, \ldots, v_{i-1}$ such that every edge $uv$ on the outerface of $G_{i-1}$ corresponds to an empty ur-semi-square as in the base case. We extend this representation to a representation of $G_i$ as follows. From the outerplanar ordering, the vertex $v_i$ is adjacent in $G_{i-1}$ to precisely one pair $u,v$ such that $uv$ is an edge on the outerface of $G_{i-1}$. Thus, we have an empty ur-semi-square $(x,-x),(x+d,-x),(x+d,-x-d)$ where the point $(x+d,-x)$ is the contact point of $L_u$ and $L_v$. Consider $\calL_i = \calL_{i-1} \cup \{L_{v_i}\}$ where $L_{v_i} = (x,x+d/2,x+d)$. Without loss of generality $L_{v_i}$ contacts $L_u$ at the point $(x+d/2,-x)$ and $L_v$ at the point $(x+d,-x-d/2)$. Moreover, these new contact points form the appropriate empty and disjoint semi-squares as needed. Finally, since the semi-square corresponding to $uv$ was empty before inserting $L_{v_i}$, the L-shape $L_{v_i}$ does not intersect any other L-shapes. Thus, $\calL_i$ is a linear equilateral L contact system as needed. 
\end{proof}

Similarly to how linear L-graphs are equivalent to linear boxicity-2 graphs and linear right-triangle graphs, we have the following corollary regarding linear contact graphs. 

\begin{corollary}\label{cor:contacts}
The following graph classes are equivalent: outerplanar, linear L contact, linear equilateral-L contact, linear ll-semi-square contact, linear square contact. 
\end{corollary}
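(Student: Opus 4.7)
The plan is to close the cycle outerplanar $\subseteq$ linear equilateral-L contact $\subseteq$ linear L contact $\subseteq$ outerplanar, and then to establish two further geometric equivalences: linear equilateral-L contact $\Leftrightarrow$ linear ll-semi-square contact $\Leftrightarrow$ linear square contact. Theorem~\ref{thm:outerplanar} only handles the maximal case of the first inclusion, so I would first extend it to an arbitrary outerplanar graph $G$ by completing $G$ to a maximal outerplanar supergraph $G^+$, representing $G^+$ via Theorem~\ref{thm:outerplanar}, and then slightly shrinking the L-shapes associated with edges in $E(G^+)\setminus E(G)$ so that the unwanted contacts are broken while no new contacts are introduced. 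The inclusion \emph{linear equilateral-L contact} $\subseteq$ \emph{linear L contact} is immediate from the definitions.

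For the inclusion \emph{linear L contact} $\subseteq$ \emph{outerplanar}, I would observe that in any linear L contact system the corner points lie on a line of negative slope (say $y=-x$) and each L-shape lies entirely in the closed half-plane $y\ge -x$, since every L extends from its corner upward and to the right. Placing each vertex at its corner and routing each edge along the two meeting L-bodies through their shared contact point gives a planar drawing in that half-plane; since all vertices lie on the bounding line, they are all incident to the outer face, so the underlying graph is outerplanar.

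For the equivalence \emph{linear equilateral-L contact} $\Leftrightarrow$ \emph{linear ll-semi-square contact}, I would replace each equilateral L-shape by the ll-semi-square whose two legs are the L itself. The hypotenuse of such a semi-square of side length $d_L$ lies on the line $x+y=d_L$, so hypotenuses of different semi-squares lie on distinct parallel lines (all parallel to the line of corners $x+y=0$) and thus cannot intersect one another. A short case analysis then shows that if a leg of another L-shape $L_v$ enters the triangular fill of $L_u$, then since $L_v$'s corner lies on $y=-x$ strictly outside $L_u$'s triangle, the leg must first cross a leg of $L_u$, yielding a contact already present in the L-system. The reverse direction is immediate by erasing the hypotenuses. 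For \emph{linear ll-semi-square contact} $\Leftrightarrow$ \emph{linear square contact}, reflecting each ll-semi-square across its hypotenuse produces the missing ur-half and completes it to a full square; since the reflected half lies on the far side of a line parallel to (and disjoint from) the line of corners, no new contacts arise, and conversely the ll-half of each square in a linear square contact system yields a linear ll-semi-square contact system.

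The main obstacle will be the case analysis establishing that filling equilateral L-shapes into ll-semi-squares (respectively completing ll-semi-squares into squares) preserves the contact structure exactly, i.e.\ introduces no spurious adjacencies. Everything else is either a direct appeal to Theorem~\ref{thm:outerplanar} or a routine geometric consequence of the linear ordering of corner points along a negative-slope line.
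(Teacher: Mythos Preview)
Your main cycle (outerplanar $\to$ linear equilateral-L contact $\to$ linear L contact $\to$ outerplanar) is exactly the paper's argument: the paper also passes to a maximal outerplanar supergraph and perturbs away unwanted contacts, and its proof that a linear contact representation yields an outerplanar drawing is precisely your corner-point-on-the-diagonal routing.

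The difference is in how the ll-semi-square and square classes are tied in. You argue shape-to-shape: fill each equilateral L to its ll-semi-square, then reflect to a full square, checking at each step that no spurious contacts appear. The paper instead applies the \emph{same} outerplanar-drawing argument uniformly to all four shape families (L, equilateral L, ll-semi-square, square): in every case the shapes sit in the half-plane $x+y\ge 0$ with their lower-left corners on $x+y=0$, so placing vertices at corners and tracing to contact points gives an outerplanar embedding. This yields ``(any listed contact class) $\subseteq$ outerplanar'' in one stroke and bypasses your case analysis entirely. The trade-off is that the paper is then tacit on the reverse inclusions outerplanar $\subseteq$ linear ll-semi-square contact and outerplanar $\subseteq$ linear square contact, implicitly relying on exactly the fill-in step you spell out; in that sense your route is more explicit.

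One caution on your square step: saying ``the reflected ur-half lies on the far side of a line parallel to the line of corners'' does not by itself rule out new contacts, because each semi-square has its \emph{own} such line $x+y=d_L$, and the ur-half of a small square can sit well inside the region occupied by a larger square. You already flag this as the main obstacle; the cleanest fix is not a general case analysis but to exploit the specific nested construction from Theorem~\ref{thm:outerplanar}, where each new L is placed inside an empty ur-semi-square of half the side length, so that the filled squares are pairwise either nested or interior-disjoint.
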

\begin{proof}
Since maximal outerplanar graphs are linear equilateral L contact graphs (by Theorem \ref{thm:outerplanar}), all outerplanar graphs are linear equilateral L contact graphs. In particular, one may simply adjust an equilateral L a by small amount to remove any individual contact with another L such that no other contact is altered. 

Moreover, given any of the contact representations listed, one can easily construct an outerplanar drawing of the graph. In particular, each vertex $v$ is located at its corresponding corner point on the line $y=-x$ and the edges $uv$ are drawn by tracing $L_u$ and $L_v$ to the corresponding contact point. Clearly all vertices lie on the outside of such a drawing and this can be done so that no edges intersect. 
\end{proof}

\subsection{2D Ray Graphs}

A graph is a \emph{2D ray} graph when it is an intersection graph of rays in the
plane where the rays have at most two directions and parallel rays do not
intersect (i.e., this is a bipartite graph class). For more information on this
graph class (including its relationship to many well-known graph classes) see \cite{Kostochka1998,OrthogonalRay2010}. 
We observe that 2D ray graphs are a strict subclass of bipartite MPT graphs
and that they play an interesting role in the structure of neighborhoods of vertices of 
MPT graphs.

\begin{proposition}\label{prop:2D-ray}
2D ray graphs are a strict subclass of bipartite MPT graphs. 
\end{proposition}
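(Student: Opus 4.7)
My plan has two parts: containment and strictness.

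For containment, I will appeal to the MPT-order characterization (Theorem~\ref{thm:mpt-order}). Given a 2D ray representation of $G$, apply an invertible affine transformation so the two directions become horizontal-right and vertical-up; this preserves intersections. Index the right-rays as $a_1,\dots,a_k$ by increasing starting height $y_{a_i}$, and the up-rays as $b_1,\dots,b_l$ by decreasing starting $x$-coordinate $x_{b_j}$, and consider the concatenated order $a_1<\cdots<a_k<b_1<\cdots<b_l$. Bipartiteness (parallel rays do not intersect) forces any quadruple $u<v<w<x$ with $uw,vx\in E(G)$ to be of the form $u=a_{i_1},v=a_{i_2},w=b_{j_1},x=b_{j_2}$ with $i_1<i_2$ and $j_1<j_2$. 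The edge $a_{i_1}b_{j_1}$ gives $x_{b_{j_1}}\ge x_{a_{i_1}}$, and the edge $a_{i_2}b_{j_2}$ gives $y_{a_{i_2}}\ge y_{b_{j_2}}$. Monotonicity of the two sort keys then yields $x_{b_{j_1}}\ge x_{b_{j_2}}\ge x_{a_{i_2}}$ and $y_{a_{i_2}}\ge y_{a_{i_1}}\ge y_{b_{j_1}}$, so $a_{i_2}b_{j_1}\in E(G)$, confirming the MPT-order property.

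For strictness, I will use $C_6$ as a witness: it is bipartite and outerplanar, hence MPT by Theorem~\ref{thm:outerplanar}, yet I claim it is not a 2D ray graph. Suppose otherwise, with color classes $\{a_1,a_2,a_3\}$ at $(x_i,y_i)$ and $\{b_1,b_2,b_3\}$ at $(u_j,v_j)$, and edges $a_ib_i$ and $a_ib_{i-1}$ (indices mod $3$). Each edge forces two weak inequalities ($u_j\ge x_i$ and $y_i\ge v_j$), while each non-edge must be witnessed by a strict inequality ($u_j<x_i$ or $y_i<v_j$). A short case analysis on the three non-edges $a_1\not\sim b_2,\ a_2\not\sim b_3,\ a_3\not\sim b_1$ (which by cyclic symmetry reduces to only a few essentially different branches) shows every combination forces a cyclic chain of strict inequalities on the $x$-coordinates (e.g.\ $x_2\le u_1<x_3\le u_2<x_1\le u_3<x_2$) or on the $y$-coordinates, which is impossible.

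The main obstacle is really the case analysis for $C_6$. The MPT-order construction is essentially mechanical once one observes that 2D-ray adjacency is a conjunction of two monotone conditions, which is exactly what allows the two sort keys (increasing $y$ for right-rays, decreasing $x$ for up-rays) to force the MPT-order property; for the hardness witness one must only be careful to check every combination of witnesses (not just the ``all-$x$'' or ``all-$y$'' cases), but the cyclic structure of $C_6$ makes each mixed case collapse to a contradiction in the same way.
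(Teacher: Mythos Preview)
Your proof is correct and takes a genuinely different route from the paper's. For containment, the paper argues geometrically: normalise the two directions to $\downarrow$ and $\leftarrow$, choose a line of negative slope lying below-left of every intersection point, and truncate each ray at that line---the result is already a linear L-system, so the graph is MPT by Theorem~\ref{thm:MPT=L}. You instead verify the MPT-order condition of Theorem~\ref{thm:mpt-order} combinatorially after sorting the two colour classes by monotone keys; this is longer but makes transparent \emph{why} an adjacency relation that is a conjunction of two monotone threshold conditions automatically satisfies the ordering property. (One small writing slip: the inequalities you explicitly extract from the edges $a_{i_1}b_{j_1}$ and $a_{i_2}b_{j_2}$ are not the ones you then feed into the displayed chains; you actually need the other halves of those adjacency conditions, which of course also hold.) For strictness both proofs use $C_6$, but the paper simply cites \cite{Kostochka1998} for $C_6\notin$ 2D-ray and asserts an easy linear L-representation, whereas you give a self-contained case analysis for non-membership and invoke Theorem~\ref{thm:outerplanar} for membership---more work, but no external dependence. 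Your case split is only sketched; note that in the mixed branches (two non-edges witnessed on $x$-coordinates and one on $y$, or vice versa) the contradiction collapses to a short two-term cycle such as $x_1<x_2<x_1$ rather than the six-term chain you display, but each branch does close as you claim.
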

\begin{proof}
Notice that, without loss of generality, we may assume that any 2D ray representation of a graph $G$
only uses $\downarrow$ and $\leftarrow$ as the two directions its rays follow.
With this in mind it is easy to see that this representation is in fact a linear
L-system. In particular, we can imagine a line with negative slope that
intersects all the rays and occurs ``below'' and to the ``left'' of any point of
intersection between two rays. Thus, by stopping all rays on this line we have a
linear L-system of $G$. 
Additionally, this inclusion is strict since a 6-cycle is not a 
2D ray graph \cite{Kostochka1998}, but it is easily constructed as a linear L-graph. 
\end{proof}

Recall that MPT graphs have been shown to have $\calO(n^2)$ maximal cliques \cite{CLH2013-point-tol4}. 
Moreover, every complete bipartite graph is a 2D ray graph. 
Thus, $\calO(n^2)$ is a tight bound (up to a multiplicative constant) on the number of maximal cliques in MPT graphs. 

We now consider the neighborhood of a single vertex and observe the following connection to 2D ray graphs and interval graphs. 

\begin{proposition}\label{prop:neighborhood}
If $G$ is an MPT graph and $v$ is a vertex of $G$, then the neighborhood of $v$
can be partitioned into $V_L$ and $V_R$ such that: 
\begin{itemize*}
\item $G[V_L]$ and $G[V_R]$ are interval graphs; and 
\item the bipartite graph induced by the edges connecting vertices from $V_L$ to
$V_R$ is a 2D ray graph. 
\end{itemize*}
\end{proposition}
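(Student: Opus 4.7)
The plan is to fix an MPT representation $\{(s_u, p_u, e_u) : u \in V(G)\}$ of $G$ and partition $N(v)$ by the relative position of the distinguished points with respect to $p_v$: set $V_L = \{u \in N(v) : p_u < p_v\}$ and $V_R = \{u \in N(v) : p_u \geq p_v\}$ (the tie case goes into $V_R$ arbitrarily). Throughout, I would use two observations that follow directly from the MPT adjacency rule applied to the pair $\{u,v\}$: every $u \in V_L$ satisfies $s_v \leq p_u < p_v \leq e_u$, and every $u \in V_R$ satisfies $s_u \leq p_v \leq p_u \leq e_v$.

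To show $G[V_L]$ is an interval graph, I would take two vertices $u,u' \in V_L$ and unpack the adjacency condition $\{p_u, p_{u'}\} \subseteq I_u \cap I_{u'}$. Since $p_u, p_{u'} \leq p_v \leq \min(e_u, e_{u'})$, the right-side containment conditions are automatic, so adjacency reduces to $\max(s_u,s_{u'}) \leq \min(p_u, p_{u'})$. This is exactly the condition that the intervals $[s_u, p_u]$ and $[s_{u'}, p_{u'}]$ intersect, so $\{[s_u, p_u] : u \in V_L\}$ is an interval representation of $G[V_L]$. Symmetrically, $\{[p_u, e_u] : u \in V_R\}$ is an interval representation of $G[V_R]$, since for $u, u' \in V_R$ the left-side containments are automatic from $s_u, s_{u'} \leq p_v \leq \min(p_u, p_{u'})$.

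For the bipartite part I would take $u \in V_L$ and $u' \in V_R$ and again strip automatic conditions: since $s_u \leq p_u < p_v \leq e_{u'}$, the condition $\{p_u, p_{u'}\} \subseteq I_u \cap I_{u'}$ collapses to the two inequalities $s_{u'} \leq p_u$ and $p_{u'} \leq e_u$. I would realise this as a 2D ray intersection pattern by assigning to each $u \in V_L$ the leftward horizontal ray $R_u = \{(x, e_u) : x \leq p_u\}$ and to each $u' \in V_R$ the upward vertical ray $R_{u'} = \{(s_{u'}, y) : y \geq p_{u'}\}$. Then $R_u \cap R_{u'} \neq \emptyset$ iff $s_{u'} \leq p_u$ and $p_{u'} \leq e_u$, which is exactly the bipartite adjacency. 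Parallel rays of the same type might share coordinates, but since only non-strict inequalities appear, generic perturbation of the $e_u$'s (resp.\ $s_{u'}$'s) by distinct amounts eliminates any unwanted intersections without changing the adjacency relation.

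The main obstacle is simply the bookkeeping of which containment conditions are automatic from membership in $N(v)$ and which actually encode the adjacency; once those are disentangled both the interval certificates and the 2D ray certificate fall out immediately. The ``parallel rays don't intersect'' clause requires a brief perturbation argument, but this is routine since all constraining inequalities admit strict slack.
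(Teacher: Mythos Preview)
Your proof is correct and follows essentially the same approach as the paper: the same partition of $N(v)$ by the position of $p_u$ relative to $p_v$, the same observation that one side of the adjacency constraints becomes automatic within each part, and the same idea of reading off the bipartite edges as a 2D ray pattern. The only cosmetic difference is that the paper phrases everything in the linear L-system language (invoking the anchored L-system characterization of interval graphs) whereas you work directly with the MPT coordinates and write down the interval and ray representations explicitly.
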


\begin{proof}
Let $\calL$ be a linear L-system of $G$. We set $V_{L}$ as the neighbors of $v$
whose corner points occur prior to $v$'s corner point and define $V_R$ to be the
remaining neighbors of $v$. Notice that the corner points of vertices in $V_R$
will occur after $v$ in $\calL$. The L-shapes of $V_L$ clearly form an anchored
linear L-system and as such correspond to an interval representation. Thus, by
Proposition~\ref{pro:intervals_as_Ls}, $G[V_L]$ and (similarly) $G[V_R]$ are interval
graphs. Moreover, by considering $\calL$, one can easily see that the bipartite
graph induced by the edges connecting vertices from $V_L$ to $V_R$ is a 2D ray
graph. In particular, the horizontal segments of $V_L$ correspond to
$\leftarrow$ rays and the vertical segments of $V_R$ correspond to $\downarrow$
rays. 
\end{proof}

\subsection{Non-MPT graphs}
\label{sec:non-mpt}

In this section we observe some structural properties of MPT graphs that allow us to identify infinite families of non-MPT graphs. These non-MPT graphs will allow us to compare MPT graphs to planar and permutation graphs. 


\begin{proposition}\label{prop:non-adjacent}
If $G$ is an MPT graph with non-adjacent vertices $u$ and $v$, then $G[N(u) \cap
N(v)]$ is an interval graph. 
\end{proposition}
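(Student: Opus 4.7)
The plan is to construct an interval representation of $G[N(u)\cap N(v)]$ directly from an MPT representation of $G$. Write $W := N(u)\cap N(v)$. First I observe that since $u$ and $v$ are non-adjacent one must have $p_u \neq p_v$ (if $p_u = p_v$ then this single point lies in both $I_u$ and $I_v$, forcing $uv \in E$), so assume $p_u < p_v$. Non-adjacency then means $p_u \notin I_v$ or $p_v \notin I_u$; since $p_u < p_v$, these conditions translate respectively to $p_u < s_v$ or $p_v > e_u$. Since the reflection $x \mapsto -x$ of the real line is an automorphism of the class of MPT representations that swaps these two conditions (it exchanges the roles of $s$'s and $e$'s and reverses the order of points), it suffices to handle the case $p_v > e_u$.

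Next I record what membership in $W$ looks like in this case. For any $w \in W$, since $p_u, p_v \in I_w$ we have $s_w \leq p_u$ and $e_w \geq p_v$; since $p_w \in I_u \cap I_v \subseteq I_u$ we have $p_w \leq e_u < p_v$. In particular, for any $w_1, w_2 \in W$ we get $p_{w_1} \leq e_u < p_v \leq e_{w_2}$ and symmetrically $p_{w_2} \leq e_{w_1}$.

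The key step is to set $J_w := [s_w, p_w]$ for each $w \in W$ and verify that $G[W]$ is exactly the intersection graph of the $J_w$'s. Expanding the MPT adjacency condition, $w_1 w_2 \in E(G)$ iff both $p_{w_1} \in I_{w_2}$ and $p_{w_2} \in I_{w_1}$, i.e., $s_{w_2} \leq p_{w_1} \leq e_{w_2}$ and $s_{w_1} \leq p_{w_2} \leq e_{w_1}$. By the previous paragraph the two upper-bound halves hold automatically, so the edge condition collapses to $s_{w_2} \leq p_{w_1}$ and $s_{w_1} \leq p_{w_2}$, which is precisely the condition that $J_{w_1}$ and $J_{w_2}$ overlap. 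Hence $\{J_w : w \in W\}$ is an interval representation of $G[W]$, so $G[W]$ is an interval graph.

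I expect the main obstacle to be conceptual rather than technical: one has to notice that fixing $p_u < p_v$ together with $p_v > e_u$ makes the right endpoints $e_w$ of the intervals $I_w$ (for $w \in W$) ``irrelevant'' for the edges inside $W$, so only the pair $(s_w, p_w)$ carries the adjacency information. Once this asymmetric collapse is seen, the interval representation $J_w = [s_w, p_w]$ is essentially forced, and the verification becomes a one-line calculation; the reflection symmetry removes the need to repeat the argument in the mirrored case $p_u < s_v$.
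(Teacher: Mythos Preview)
Your proof is correct. It is essentially the same observation as the paper's, but carried out directly in the pointed-interval model rather than via the linear L-system picture: the paper argues geometrically that the L-shapes of vertices in $N(u)\cap N(v)$ form an \emph{anchored} linear L-system (three pictorial cases) and then invokes Proposition~\ref{pro:intervals_as_Ls} to conclude that the induced subgraph is interval, whereas you reduce to a single case by the reflection $x\mapsto -x$ and exhibit the interval representation $J_w=[s_w,p_w]$ explicitly. What you gain is a self-contained argument that does not rely on the L-system characterization; what the paper's version gains is a uniform geometric picture that makes the ``anchored'' structure visually immediate and ties the result to the rest of the paper's machinery. Your explicit intervals $[s_w,p_w]$ are exactly what one obtains by unwinding Proposition~\ref{pro:intervals_as_Ls} on the anchored system in the appropriate case, so the two arguments are really the same proof in different coordinates.
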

\begin{proof}
Consider the relative position of $u$ and $v$ in the linear L-system of $G$.
Without loss of generality they must occur as in Figure \ref{fig:non-interval+2}. In each
possibility the corner point of any common neighbour of $u$ and $v$ occurs
in the shaded region; i.e., in every linear L-system of $G$, the L-shapes
corresponding to $N(u) \cap N(v)$ form an anchored linear L-system (anchored to
$v$'s L-shape).  Therefore, by Proposition~\ref{pro:intervals_as_Ls}, $G[N(u) \cap
  N(v)]$ is an interval graph.

\begin{figure}[ht]
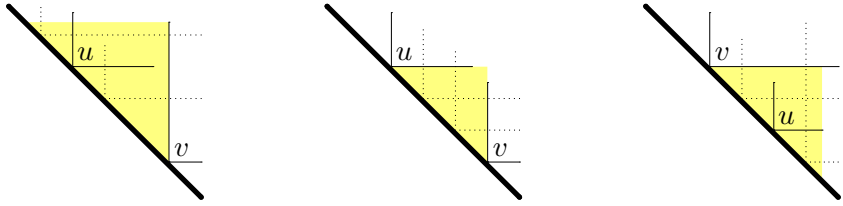
\centering\newxyColor{LightYellow}{1.0 1.0 0.5}{rgb}{}
$\xy/r1pc/:@i@={(0.6,-0.6),(5,-0.6),(5,-5)},0*[LightYellow]\xypolyline{*};
{\ar@{-}@[|2pt] (0,-0.1);(6,-6.1)};{\ar@{.} (1,0);(1,-1)};
{\ar@{.} (1,-1);(6,-1)};{\ar@{-} (2,-0.3);(2,-2)};{\ar@{-} (2,-2);(4.5,-2)};
{\ar@{.} (3,-1.2);(3,-3)};{\ar@{.} (3,-3);(6,-3)};{\ar@{-} (5,-0.6);(5,-5)};
{\ar@{-} (5,-5);(6,-5)};(2.4,-1.6)*{u};(5.4,-4.6)*{v};\endxy$
\qquad\qquad
$\xy/r1pc/:@i@={(2,-2),(5,-2),(5,-5)},0*[LightYellow]\xypolyline{*};
{\ar@{-}@[|2pt] (0,-0.1);(6,-6.1)};{\ar@{-} (2,-0.3);(2,-2)};
{\ar@{-} (2,-2);(4.5,-2)};{\ar@{.} (3,-0.7);(3,-3)};{\ar@{.} (3,-3);(6,-3)};
{\ar@{.} (4,-1.4);(4,-4)};{\ar@{.} (4,-4);(6,-4)};{\ar@{-} (5,-2.5);(5,-5)};
{\ar@{-} (5,-5);(6,-5)};(2.4,-1.6)*{u};(5.4,-4.6)*{v};\endxy$
\qquad\qquad
$\xy/r1pc/:@i@={(2,-2),(5.5,-2),(5.5,-5.5)},0*[LightYellow]\xypolyline{*};
{\ar@{-}@[|2pt] (0,-0.1);(6,-6.1)};{\ar@{-} (2,-0.3);(2,-2)};
{\ar@{-} (2,-2);(6,-2)};{\ar@{.} (3,-1.0);(3,-3)};{\ar@{.} (3,-3);(6,-3)};
{\ar@{-} (4,-2.5);(4,-4)};{\ar@{-} (4,-4);(5.5,-4)};{\ar@{.} (5,-0.5);(5,-5)};
{\ar@{.} (5,-5);(6,-5)};(2.4,-1.6)*{v};(4.4,-3.6)*{u};\endxy$
\caption{The three ways to represent two non-adjacent vertices $u,v$ in a linear
$L$-model. Notice that any common neighbor of $u$ and $v$ must have its corner
point in the shaded region.}
\label{fig:non-interval+2}
\end{figure}
\end{proof}

Notice that Proposition \ref{prop:non-adjacent} is tight. In particular, if one adds a independent set $I$ to an interval graph $G$ such that every element of $I$ is adjacent to every vertex in $G$, then the resulting graph $G'$ is an MPT graph. 
Specifically, by Proposition~\ref{pro:intervals_as_Ls}, $G$ has an anchored linear L-system $\calL$. 
We form a linear L-system for $G'$ as follows. Starting from $\calL$, one simply adds a set of $|I|$ horizontal
segments such that the first one occurs ``just below'' the anchor point of $\calL$, and each subsequent segment occurs ``just below'' the previous segment. Each such segment will intersect every L-shape of in $\calL$ (since they are anchored) and they are disjoint from each other. Thus, this is a linear L-system of $G'$; i.e., $G'$ is an MPT graph. This leads to the following observation regarding minimal forbidden induced subgraphs for MPT graphs. The set of minimal forbidden induced subgraphs of interval graphs is known \cite{LB1962} and is infinite.

\begin{observation}\label{obs:interval-fisc}
If $H$ is a minimal forbidden induced subgraph for interval graphs and $G$ is obtained from $H$ by adding two non-adjacent universal vertices $x$ and $y$ to $H$; i.e., $V(G) = V(H) \cup \{x,y\}$ and $E(G) = E(H) \cup \{zu : z \in \{x,y\}$ and $u \in V(H)\}$, then either $G$,  $G \setminus \{x\}$, or $H$ is a minimal forbidden induced subgraph of MPT graphs.  
\end{observation}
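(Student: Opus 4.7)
The plan is to first show that $G$ itself is never MPT, using Proposition \ref{prop:non-adjacent}, and then split into cases on whether $H$ and $G\setminus\{x\}$ are MPT in order to pinpoint which of the three candidates is actually minimal forbidden. For the first step, observe that since $x,y$ are universal and non-adjacent in $G$, we have $N_G(x)\cap N_G(y)=V(H)$, so $G[N_G(x)\cap N_G(y)]=H$. Because $H$ is a minimal forbidden induced subgraph for interval graphs, $H$ is not an interval graph, and then Proposition \ref{prop:non-adjacent} immediately forces $G\notin\mathrm{MPT}$.

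Next I would split on whether $H$ is an MPT graph. If $H$ is not MPT, then the minimality of $H$ as a non-interval graph says that every proper induced subgraph $H\setminus\{v\}$ is an interval graph, hence MPT by Corollary \ref{cor:interval_in_mpt}; therefore $H$ is itself a minimal forbidden induced subgraph for MPT graphs. This takes care of one third of the disjunction.

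If $H$ is MPT, I would further split on whether $G\setminus\{x\}=H+y$ (with $y$ universal in $H$) is MPT. In the sub-case where $G\setminus\{x\}$ is not MPT, it must be minimal forbidden: removing $y$ returns $H$ (MPT by assumption), and removing any $v\in V(H)$ yields $(H\setminus\{v\})+y$ where $H\setminus\{v\}$ is an interval graph extended by a single universal vertex, which is still an interval graph and hence MPT. In the sub-case where $G\setminus\{x\}$ is MPT, I would argue that $G$ itself is minimal forbidden: by the symmetry of $x$ and $y$, both $G\setminus\{x\}$ and $G\setminus\{y\}$ are MPT; and for any $v\in V(H)$, the graph $G\setminus\{v\}=(H\setminus\{v\})+\{x,y\}$ is an interval graph with an independent set of two universal vertices adjoined, which is MPT by the explicit construction (extending an anchored linear L-system by two low horizontal segments) given in the paragraph immediately preceding the observation.

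The main obstacle, and indeed the reason the statement offers a three-way alternative rather than a cleaner single answer, is the sub-case where $H$ is MPT: whether adjoining just \emph{one} universal vertex to the non-interval graph $H$ preserves MPT-ness cannot be decided by the paper's construction, because that construction requires the base graph to be an interval graph. The dichotomy on the status of $G\setminus\{x\}$ is exactly what captures this ambiguity, and the two sub-cases feed naturally into the two remaining candidates $G\setminus\{x\}$ and $G$ of the disjunction.
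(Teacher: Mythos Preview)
Your proposal is correct and matches the paper's intended reasoning. The paper does not give a formal proof of this observation; it is stated immediately after Proposition~\ref{prop:non-adjacent} and the construction showing that an interval graph plus a universal independent set is MPT, with the understanding that the observation follows from these two facts by exactly the case analysis you have written out.
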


By Proposition \ref{prop:non-adjacent} we see that $K_{2,2,2}$, the graph formed by
taking a 4-cycle together with two non-adjacent
vertices adjacent to each vertex of the cycle, is not an MPT graph. However, it
is easy to see that this graph is a permutation graph as well as a planar graph
(see Figure \ref{fig:k_222}).  Moreover, non planar graphs (e.g., the 5-clique) and non 
permutation graphs (e.g., the graph in Figure
\ref{fig:L-rect-tri}) are both MPT graphs. Thus we have the following observation.

\begin{figure}[ht]
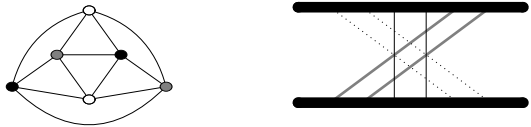
\centering
$\xy/r2pc/:(0,0)*[o]{}="x1";(1,0)*[o]{}="x2";
(0.5,0.7)*[o]{}="x3";(0.5,-0.7)*[o]{}="x4";
(-0.7,-0.5)*[o]{}="x5";(1.7,-0.5)*[o]{}="x6";
{\ar@{-} "x1";"x2"};{\ar@{-} "x1";"x3"};{\ar@{-} "x1";"x4"};{\ar@{-} "x2";"x3"};
{\ar@{-} "x2";"x4"};{\ar@{-} "x1";"x5"};{\ar@{-} "x2";"x6"};
{\ar@{-}@/_/ "x3";"x5"};{\ar@{-}@/^/ "x3";"x6"};{\ar@{-} "x4";"x5"};
{\ar@{-} "x4";"x6"};{\ar@{-}@/_1.2pc/ "x5";"x6"};
"x1"*[o][F**:gray]{\phantom{_s}};"x2"*[o][F**:black]{\phantom{_s}};
"x3"*[o][F-][F**:white]{\phantom{_s}};"x4"*[o][F-][F**:white]{\phantom{_s}};
"x5"*[o][F**:black]{\phantom{_s}};"x6"*[o][F**:gray]{\phantom{_s}};\endxy$
\qquad\qquad
$\xy/r1pc/:{\ar@{.} (1,1.5);(5,-1.5)};{\ar@{-}@[gray]@[|1pt] (1,-1.5);(5,1.5)};
{\ar@{.} (2,1.5);(6,-1.5)};{\ar@{-}@[gray]@[|1pt] (2,-1.5);(6,1.5)};
{\ar@{-} (3,-1.5);(3,1.5)};{\ar@{-} (4,-1.5);(4,1.5)};
{\ar@{-}@[|4pt] (0,-1.5);(7,-1.5)};{\ar@{-}@[|4pt] (0,1.5);(7,1.5)};\endxy$
\caption{A planar drawing of the non-MPT graph $K_{2,2,2}$ together with a
permutation representation of it.}
\label{fig:k_222}
\end{figure}

\begin{observation}\label{obs:incomparable}
The MPT graph class is incomparable with both planar graphs and permutation
graphs.
\end{observation}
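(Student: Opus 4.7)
The plan is to verify the four non-containment statements that together give incomparability. In both directions, I will rely entirely on examples already in hand, so no new technical machinery is needed.

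First, to show that MPT graphs are not contained in the planar graphs, I would exhibit an MPT graph that is not planar. The clique $K_5$ works: it is an interval graph (trivially, with five intervals sharing a common point), so by Corollary~\ref{cor:interval_in_mpt} it is an MPT graph, yet it is the standard Kuratowski non-planar graph. Symmetrically, to show that MPT graphs are not contained in the permutation graphs, I would invoke the net of Figure~\ref{fig:L-rect-tri}, which was explicitly constructed as a linear L-system in that figure (hence is an MPT graph by Theorem~\ref{thm:MPT=L}), but which is known to fail to be a permutation graph. A short justification of the latter is that the net contains an asteroidal triple (the three degree-one vertices), and every permutation graph is a co-comparability graph, and co-comparability graphs are AT-free.

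For the reverse inclusions, I would use the single graph $K_{2,2,2}$, which by Proposition~\ref{prop:non-adjacent} is not MPT: the two non-adjacent vertices in one part of the tripartition have a common neighborhood inducing a 4-cycle, which is not an interval graph. Figure~\ref{fig:k_222} then gives both a planar embedding and a permutation diagram of $K_{2,2,2}$, witnessing that $K_{2,2,2}$ is simultaneously planar and a permutation graph. Hence neither the planar graph class nor the permutation graph class is contained in the MPT graphs.

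Combining the four witnesses — $K_5$ for MPT $\not\subseteq$ planar, the net for MPT $\not\subseteq$ permutation, and $K_{2,2,2}$ for both planar $\not\subseteq$ MPT and permutation $\not\subseteq$ MPT — establishes the incomparability. The only step that requires any real argument rather than pure citation is the claim that the net is not a permutation graph; this is the small obstacle in the proof, and I would dispatch it via the asteroidal triple argument sketched above, or alternatively by inspecting the (finite) forbidden subgraph characterization of co-comparability graphs.
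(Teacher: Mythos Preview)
Your proposal is correct and follows exactly the paper's approach: the paper also uses $K_5$ as an MPT non-planar graph, the net of Figure~\ref{fig:L-rect-tri} as an MPT non-permutation graph, and $K_{2,2,2}$ (via Proposition~\ref{prop:non-adjacent} and Figure~\ref{fig:k_222}) as a planar permutation graph that is not MPT. Your asteroidal-triple justification for the net not being a permutation graph is a welcome addition, as the paper merely asserts this fact without argument.
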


The minimal forbidden induced subgraphs of MPT graphs include many more graphs than those built from the graphs non-interval graphs. For example, we will show that the full subdivision of any non-outerplanar graph is also not an MPT graph. The \emph{full subdivision} $G$ of a graph $H$ is the graph obtained from $H$ by subdividing every edge of $G$. It is known that any string representation $\calS$ of the full-subdivision $H$ of a planar graph $G$ is combinatorially equivalent to some planar embedding of $G$ \cite{Ehrlich1976}. In particular, in $\calS$, each edge $e$ of $G$ corresponds to a string $S_e$ which connects exactly the two strings corresponding to vertices incident with $e$ and $S_e$ does not intersect any other strings. From this it is easy to see that the full-subdivision of any non-planar graph is not a string graph \cite{Ehrlich1976}. 

A graph $G$ is an \emph{outer-string} graph when it has a string representation such that, for a fixed circle $C$, every string is contained within $C$ and exactly one endpoint of each string belongs to $C$. It is easy to see that outer-string graphs are a superclass of MPT graphs; i.e., we can easily replicate a linear L-system with an outer-string representation. Moreover, in Lemma \ref{lem:non-outerplanar}, we observe that the full-subdivision of a non-outerplanar graph is not an outer-string graph and, consequently, not an MPT graph. 

\begin{lemma}\label{lem:non-outerplanar}
If $H$ is a non-outerplanar graph and $G$ is the graph obtained from $H$ by subdividing every edge of $H$, then $G$ is not an outer-string graph (i.e., $G$ is not an MPT graph). 
\end{lemma}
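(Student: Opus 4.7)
The plan is to assume, for contradiction, that $G$ has an outer-string representation inside a closed disk $D$ with boundary circle $C$, each string having an endpoint on $C$, and then to extract from this representation an outerplanar embedding of $H$. For every $v \in V(H)$, let $S_v$ denote the string representing $v$, with boundary endpoint $q_v \in C$; for every edge $e = uv$ of $H$, let $w_e$ be the corresponding subdivision vertex in $G$ and $S_{w_e}$ its string. The relevant adjacencies in $G$ give the key disjointness properties: distinct vertex strings $S_u, S_v$ are disjoint, distinct subdivision strings $S_{w_e}, S_{w_{e'}}$ are disjoint (since $w_e$ and $w_{e'}$ are non-adjacent in $G$), and each $S_{w_e}$ meets exactly the two vertex strings $S_u, S_v$ whose endpoints coincide with $e = uv$.

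The second step is to convert the representation into a planar drawing of $H$. For each edge $e = uv$, choose an intersection point $a_e \in S_u \cap S_{w_e}$ and $b_e \in S_v \cap S_{w_e}$, and define a curve $\gamma_e$ from $q_u$ to $q_v$ by following $S_u$ from $q_u$ to $a_e$, then $S_{w_e}$ from $a_e$ to $b_e$, then $S_v$ from $b_e$ to $q_v$. After this step, the various $\gamma_e$ are contained in $D$, meet the boundary $C$ only at the points $q_v$, and the only crossings among them occur along the vertex strings $S_v$ (never along the interior middle portions, because the subdivision strings are pairwise disjoint and disjoint from all other vertex strings).

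The third step is to eliminate these spurious crossings by a local perturbation along each vertex string $S_v$. At $S_v$, the curves $\gamma_e$ for the edges $e$ incident to $v$ enter along the intersection points $a_e$ (or $b_e$); these entry points are linearly ordered along $S_v$. Since the portions of the $\gamma_e$ that lie on $S_v$ merely share a common sub-arc toward $q_v$, one can push each $\gamma_e$ slightly off of $S_v$ to one side (in a tubular neighborhood of $S_v$ in $D$) in the order dictated by the entry points, producing pairwise interior-disjoint curves from $q_u$ to $q_v$ for all edges $uv$ of $H$. This yields a planar drawing of $H$ inside $D$ in which every vertex lies on the boundary circle $C$, so every vertex is incident to the unbounded face: $H$ is outerplanar, contradicting the hypothesis. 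Finally, since MPT graphs are outer-string graphs, $G$ is not an MPT graph either.

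The main obstacle is the perturbation step. One has to verify that the shared portions of the $\gamma_e$ are always of the ``common terminal arc'' type (i.e., pieces of $S_v$ ending at $q_v$, not arbitrary interleavings), which is exactly what the disjointness of distinct $S_{w_e}$ and distinct $S_u, S_v$ guarantees, and that a consistent side-assignment for the push-offs exists in the tubular neighborhood of each $S_v$; this is a standard piecewise-linear / topological argument analogous to the one used by Ehrlich--Even--Tarjan for the non-planar case, adapted here to the disk.
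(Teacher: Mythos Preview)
Your argument is correct, and it takes a genuinely different route from the paper's.

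The paper argues contrapositively and leans almost entirely on the Ehrlich--Even--Tarjan result quoted just before the lemma: any string representation of the full subdivision $G$ is combinatorially equivalent to a planar embedding of $H$. Since $H$ is non-outerplanar, in every such embedding some vertex $v$ is separated from the outer face by a cycle of non-neighbors; translating this back, the string $S_v$ is trapped in a region bounded by strings of non-neighbors of $v$, and hence cannot reach the boundary circle $C$. That is the entire proof in the paper --- two sentences, with all the topological work outsourced to the citation.

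You instead run the construction forward: assuming an outer-string representation of $G$, you build an explicit outerplanar drawing of $H$ by placing each $v$ at its boundary point $q_v$ and routing each edge $e=uv$ along the composite curve $S_u \cup S_{w_e} \cup S_v$, then cleaning up the overlaps along each vertex string via a tubular-neighborhood reroute. This is more self-contained and makes the role of the boundary circle transparent: the $q_v$'s literally become the vertices on the outer face. The price is that you must carry out the perturbation step yourself, and your description of it as ``common terminal arcs'' is slightly too optimistic --- the middle piece of $\gamma_e$ along $S_{w_e}$ can meet $S_u$ at points other than $a_e$, producing transversal crossings with other $\gamma_{e'}$, not just shared sub-arcs. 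The fix you allude to (contracting each disjoint vertex string to a small disk touching $C$ and rerouting inside) is exactly the standard Ehrlich--Even--Tarjan manoeuvre, so the argument goes through; just be aware that the crossings are not purely of the terminal-arc type.
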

\begin{proof}
Since $G$ is not outerplanar, any string representation $\calS$ of $H$ necessarily contains a string $S_v$ such that $S_v$ is contained in a region enclosed by the strings of a set $X$ of non-neighbors of $v$. In particular, it is not possible to draw a circle onto $\calS$ so that both $S_v$ and every string of the vertices in $X$ satisfy the outer-string property. 
\end{proof}

Even with the set of forbidden graphs we have observed, there are yet many more which are not captured (e.g., the complement of a 7-cycle). Thus, a complete description of the minimal forbidden induced subgraphs for MPT graphs remains an open problem. 

\section{Concluding Remarks}

In this paper we have introduced max point-tolerance graphs. 
We have characterized this class and demonstrated inclusions with respect to well-known graph classes. Our results are summarized in Theorems \ref{thm:rel} and \ref{thm:char} below.
We also solved the WIS problem in polynomial time, 2-approximated the clique cover problem in polynomial time, showed the NP-completeness of the coloring problem, and $\log(n)$-approximated the coloring problem in polynomial time. 

Interesting open problems remain for this graph class. 
Perhaps the most interesting is that of recognition. Our characterizations of this graph class provide a variety of ways to approach this problem. 
Several combinatorial optimization problems remain open for this graph class. Two particularly interesting ones are: $k$-coloring (for fixed $k$) and unweighted clique cover. One may also try to improve the $\chi$ binding function as we rely on an existing result regarding coloring axis-aligned rectangles where no pair are in a containment relation.
Additionally, one may be interested to further study the relationships with existing graph classes. 

Another direction of research would be to study \emph{min} point-tolerance
graphs. In particular, just as there are min tolerance graphs and max tolerance
graphs one can consider \emph{min point-tolerance (mPT)} graphs.

\begin{definition}
A graph $G = (V,E)$ is a \emph{min point-tolerance (mPT)} graph if each vertex
$v$ of $G$ can be mapped to a \emph{pointed-interval} $(I_v, p_v)$ where $I_v$
is an interval of $\reals$ and $p_v \in I_v$ such that $uv$ is an edge of $G$
iff either $p_u \in I_v$ or $p_v \in I_u$.
\end{definition}

It should be noted that this graph class is referred to as \emph{point-core bi-tolerance} graphs and further information can be found in Chapter 5 of \cite{GT-Tolerance2004}. 
Additionally, there is a directed graph class utilizing
this definition, namely, the interval catch
digraphs~\cite{Prisner1989-CatchDigraphs,Prisner1994-CatchDigraphs} mentioned in
the introduction. The min point-tolerance graphs are precisely the undirected
graphs underlying interval catch digraphs. In contrast, max point-tolerance graphs
are precisely the undirected graphs underlying the bi-directed edges of interval
catch digraphs.

\begin{theorem}\label{thm:rel}
The max point-tolerance graph class strictly includes interval graphs, outerplanar graphs, and 2D ray graphs. 


\end{theorem}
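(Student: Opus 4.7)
The plan is simply to collect the three inclusions that have already been established in earlier sections and verify strictness of each. None of the three inclusions requires new machinery, so the "proof" is really a short synthesis.

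First, for interval graphs, the inclusion is exactly Corollary~\ref{cor:interval_in_mpt}: given any interval representation one applies Proposition~\ref{pro:intervals_as_Ls} to obtain an anchored linear L-system, which by Theorem~\ref{thm:MPT=L} is an MPT representation. Strictness is witnessed by the net in Figure~\ref{fig:L-rect-tri}, which is drawn there as a linear L-graph but is a classical non-interval graph \cite{LB1962}.

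Second, for outerplanar graphs, the inclusion is Theorem~\ref{thm:outerplanar} combined with Corollary~\ref{cor:contacts}: every maximal outerplanar graph admits a linear equilateral-L contact system, hence a linear L-system, hence is MPT by Theorem~\ref{thm:MPT=L}; any outerplanar graph is a subgraph of a maximal outerplanar graph on the same vertex set and the class of MPT graphs is trivially closed under edge deletion (simply inherit the pointed intervals). Strictness is immediate: $K_4$ is not outerplanar but is MPT, e.g., by giving each of four vertices the same pointed interval.

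Third, for 2D ray graphs, the inclusion is Proposition~\ref{prop:2D-ray}: orient the two ray directions as $\downarrow$ and $\leftarrow$, and truncate all rays along a line of negative slope that lies below and to the left of every ray-ray intersection; the result is a linear L-system and hence (by Theorem~\ref{thm:MPT=L}) an MPT representation of the same graph. Strictness was verified in Proposition~\ref{prop:2D-ray} by noting that $C_6$ is a bipartite MPT graph that is not a 2D ray graph \cite{Kostochka1998}.

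There is no real obstacle here; the only thing to be careful about is that each strict-containment witness is genuinely in MPT but outside the putative subclass. All three witnesses ($K_4$ for outerplanar, the net for interval, $C_6$ for 2D ray) are already produced in the paper or are elementary, so assembling them into a single proof of Theorem~\ref{thm:rel} amounts to writing one sentence per class with pointers to the relevant earlier results.
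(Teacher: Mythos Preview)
Your synthesis is the right idea and matches the paper's intent, but the outerplanar case contains a genuine error. You write that ``the class of MPT graphs is trivially closed under edge deletion (simply inherit the pointed intervals).'' Inheriting the pointed intervals produces exactly the same intersection graph, not one with fewer edges, so this does not delete anything. More to the point, MPT graphs are \emph{not} closed under edge deletion: every graph is a spanning subgraph of a complete graph, and complete graphs are trivially MPT, so closure under edge deletion would make every graph MPT --- contradicting, e.g., that $K_{2,2,2}$ is not MPT (Section~\ref{sec:non-mpt}).

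The paper's route (Corollary~\ref{cor:contacts}) instead exploits that Theorem~\ref{thm:outerplanar} produces a \emph{contact} representation: in a linear L contact system, a single contact can be destroyed by an arbitrarily small perturbation of one L-shape without affecting any other contact. This is what lets one pass from maximal outerplanar graphs to arbitrary outerplanar graphs while staying inside linear L-graphs (hence MPT by Theorem~\ref{thm:MPT=L}). Replace your edge-deletion sentence with this perturbation argument and the rest of your proof --- the interval case via Corollary~\ref{cor:interval_in_mpt}, the 2D ray case via Proposition~\ref{prop:2D-ray}, and your strictness witnesses --- is correct and aligns with the paper.
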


\begin{theorem}\label{thm:char}
For a graph $G = (V,E)$, the following are equivalent:

\begin{itemize*}
\item $G$ is a max point-tolerance graph. 

\item $G$ is a linear L-graph (equivalently, a linear rectangle-graph or a
linear right-triangle-graph). 

\item The vertices of $G$ can be ordered by $<$ so that for every $u,v,w,x \in
V(G)$, if $u<v<w<x$ and $uw,vx \in E(G)$, then $vw$ is an edge of $G$. 

\item There are two interval graphs $H_1 = (V,E_1)$ and $H_2 = (V,E_2)$ such
that $E = E_1 \cap E_2$ and the vertices of $G$ can be ordered by $<$ so that
for every $u<v<w$ if $uw \in E_1$ then $uv \in E_1$ and if $uw \in E_2$ then $wv
\in E_2$. 

\item $G$ is a cyclic segment graph.

\item There is an interval catch digraph $D = (V,A)$ such that the bi-directed
arcs of $D$ are precisely the edges of $G$. 
\end{itemize*}

\end{theorem}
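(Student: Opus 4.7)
The plan is to prove Theorem \ref{thm:char} by showing that each of conditions (2)--(6) is equivalent to (1), the defining property of an MPT graph. Four of the five required equivalences are already established as standalone theorems earlier in the paper, so the proof is largely a bookkeeping exercise that cites those results; only the equivalence with interval catch digraphs (6) needs a brief direct argument.

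Concretely, I would assemble the proof as follows. For (1) $\Leftrightarrow$ (2), I would cite Theorem \ref{thm:MPT=L}, together with the observation made just before it (with Figure \ref{fig:L-rect-tri}) that linear L-graphs, linear rectangle-graphs, and linear right-triangle-graphs coincide. For (1) $\Leftrightarrow$ (3), I would cite Theorem \ref{thm:mpt-order}. For (1) $\Leftrightarrow$ (4), I would cite Theorem \ref{thm:2interval-MPT}, noting that the order conditions in (4) are precisely the I-order property from Definition \ref{def:i-order} applied to $H_1$ and, in the reversed order, to $H_2$. For (1) $\Leftrightarrow$ (5), I would cite Theorem \ref{thm:MPT=cyclic-seg}. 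Once these are listed, any two among (1)--(5) are equivalent by composing through (1).

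The only remaining piece is (1) $\Leftrightarrow$ (6), which I would prove directly. Given an MPT representation $\{(I_v, p_v) : v \in V\}$ of $G$, let $D=(V,A)$ be the interval catch digraph built from the same pointed-intervals, i.e., $uv \in A$ iff $p_u \in I_v$. Then both $uv$ and $vu$ belong to $A$ iff $p_u \in I_v$ and $p_v \in I_u$; combined with the trivial $p_u \in I_u$ and $p_v \in I_v$, this is exactly the condition $\{p_u, p_v\} \subseteq I_u \cap I_v$ defining an edge of $G$. Conversely, the pointed-interval data of any interval catch digraph simultaneously yields an MPT representation of the undirected graph whose edges are the bi-directed arcs. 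The only ``obstacle'' here is checking that the symmetric intersection condition $\{p_u,p_v\} \subseteq I_u \cap I_v$ unpacks into the pair of one-sided conditions $p_u \in I_v$ and $p_v \in I_u$, which is immediate since each point lies in its own interval. With this small argument in hand, the theorem follows by assembling all six equivalences.
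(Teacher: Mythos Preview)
Your proposal is correct and matches the paper's approach: Theorem~\ref{thm:char} is presented there as a summary of the characterizations proved earlier (Theorems~\ref{thm:MPT=L}, \ref{thm:mpt-order}, \ref{thm:2interval-MPT}, \ref{thm:MPT=cyclic-seg}), with the interval catch digraph equivalence treated as the immediate observation already noted in the introduction. Your brief unpacking of the bi-directed arc condition is exactly that observation made explicit.
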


\section*{Acknowledgments}

Steven Chaplick was supported by NSERC and by the ESF GraDR 
EUROGIGA grant as project GACR GIG/11/E023.  
Stefan Felsner was partially supported by the ESF EuroGIGA projects Compose and GraDR. 
Magn\'{u}s M. Halld\'{o}rsson was supported by the Icelandic Research Fund grant-of-excellence 120032011.
Thomas Hixon was supported by BMS (Berlin Mathematical School).
Juraj Stacho gratefully acknowledges support from EPSRC, grant EP/I01795X/1, and from the Centre for 
Discrete Mathematics and its Applications (DIMAP), which is partially funded 
by EPSRC award EP/D063191/1.

\section*{References}

\end{document}